\newtheorem{theorem}{Theorem}
\newtheorem{lemma}[theorem]{Lemma}
\newtheorem{definition}{Definition}
\newtheorem{example}{Example}
\DeclareMathOperator*{\argmax}{arg\,max}
\newcommand{\s}{\mathbf{s}}
\newcommand{\ct}{\mathbf{c}}
\newcommand{\p}{\mathbf{p}}
\title{Persuading Voters: It's Easy to Whisper, It's Hard to Speak Loud}
\author{
	Matteo Castiglioni, Andrea Celli, Nicola Gatti\\
	Politecnico di Milano, Piazza Leonardo da Vinci 32, I-20133, Milan, Italy\\
	\texttt{\{name.surname\}@polimi.it}\\
}
\date{\today}
\begin{document}

\maketitle

\begin{abstract}
	We focus on the following natural question: \emph{is it possible to influence the outcome of a voting process through the strategic provision of information to voters who update their beliefs rationally?}
	%
	%
	We investigate whether it is computationally tractable to design a signaling scheme maximizing the probability with which the sender's preferred candidate is elected.
	We resort to the model recently introduced by~\cite{arieli2019private} (\emph{i.e.}, without inter-agent externalities), and focus on, as illustrative examples, $k$-voting rules and plurality voting.
	There is a sharp contrast between the case in which \emph{private} signals are allowed and the more restrictive setting in which only \emph{public} signals are allowed.
	In the former, we show that an optimal signaling scheme can be computed efficiently both under a $k$-voting rule and plurality voting. 
	%
	%
	%
	%
	In establishing these results, we provide two contributions applicable to general settings beyond voting.
	Specifically, we extend a well-known result by~\cite{dughmi2017algorithmic} to more general settings and prove that, when the sender's utility function is anonymous, computing an optimal signaling scheme is fixed-parameter tractable in the number of receivers' actions.
	In the public signaling case, we show that the sender's optimal expected return cannot be approximated to within \emph{any} factor under a $k$-voting rule.
	This negative result easily extends to plurality voting and problems where utility functions are anonymous.
\end{abstract}



\section{Introduction}

Information is the foundation of any democratic election, as it allows voters for better choices.
In many settings, uninformed voters have to rely on inquiries of third party entities to make their decision.
For example, in most trials, jurors are not given the possibility of choosing which tests to perform during the investigation or which questions are asked to witnesses. 
They have to rely on the prosecutor's investigation and questions.
The same happens in elections, in which voters gather information from third-party sources.
With the advent of modern media environments, malicious actors have unprecedented opportunities to garble this information and influence the outcome of the election via misinformation and fake news~\citep{allcott2017social}.
Reaching voters with targeted messages has never been easier.
Hence, we pose the question: \emph{can a malicious actor influence the outcome of a voting process only by the provision of information to voters who update their beliefs rationally?}

We describe the problem through the \emph{Bayesian persuasion} framework by~\cite{kamenica2011bayesian}.
At its core, the model involves an informed \emph{sender} trying to influence the behavior of self-interested \emph{receivers} through the provision of payoff-relevant information.
\cite{kamenica2011bayesian} study how a single sender, having access to some private information, can design a signaling scheme to persuade a single receiver to select a favorable action.
The model assumes the sender's commitment, which is realistic in many settings~\citep{kamenica2011bayesian,dughmi2017survey}.
One argument to that effect is that reputation and credibility may be a key factor for the long-term utility of the sender~\citep{rayo2010optimal}.

A number of recent works study \emph{social influence} as a means of election control~\citep{sina2015adapting,faliszewski2018opinion,wilder2018controlling,wilder2019defending}.
The crucial difference in our model is that voters are strategic players, who update their beliefs rationally.
This property forces the sender to carefully craft the signaling scheme to preserve persuasiveness (\emph{i.e.}, incentive compatibility).
Other mechanisms for election interference that have been studied are \emph{bribery}~\citep{faliszewski2009llull,erdelyi2017complexity}, and \emph{adding}/\emph{deleting voters}/\emph{candidates}~\citep{loreggia2015controlling,faliszewski2011multimode,liu2009parameterized,chen2017elections}.
Both these mechanisms differ from ours in the provision of tangible incentives (the former) or the modification of the election setting (the latter).
In our model, the malicious actor (\emph{i.e.}, the sender), can influence the outcome of the election only by deciding \emph{who gets to know what}.

We extend the fundamental model of~\cite{arieli2019private} to describe general voting problems.
Specifically, our model comprises of multiple receivers, an arbitrary number of actions and states of nature, and no inter-agent externalities.~\footnote{
%
Without inter-agent externalities, the utility of a voter depends only on her appreciation of the candidate she selected, and on the state of nature. 
It does not depend on actions taken by other voters.
}
We adopt the sender's perspective, and study how to compute optimal \emph{private} (\emph{i.e.}, different voters may receive different information) and \emph{public} (\emph{i.e.}, all voters observe the same signal) communication schemes.
We observe a sharp contrast between the two settings: all results for the private setting are positive (\emph{i.e.}, polynomial time tractability),  while coordinating voters via public signaling is largely intractable.

\textbf{Original contributions}.
We focus on two commonly adopted voting rules: \emph{$k$-voting rules} and \emph{plurality voting}.
First, we provide an efficient implementation of the optimal private signaling scheme under a $k$-voting rule.
Then, by generalizing a result by~\cite{dughmi2017algorithmic}---later revisited by~\cite{xu2019tractability}---, we describe a necessary and sufficient condition for the efficient computation of private signaling schemes for a general class of sender's objective functions.
This condition is employed to show that private Bayesian persuasion is fixed-parameter tractable in the number of receivers' actions when sender's utility is anonymous, and to show that an optimal private signaling scheme under plurality voting can be found in polynomial time.
As for public signaling, we provide a new inapproximability result on the problem of computing an optimal public signaling scheme under a $k$-voting rule, showing that it cannot be approximated to within \emph{any} factor of the input size.
This result significantly improves previous hardness results for this setting due to~\cite{dughmi2017algorithmic} and easily extends to anonymous sender's utility functions and plurality voting.

\vspace{-.cm}
\section{Related Works}
\vspace{-.cm}
The classic model of Bayesian persuasion is due to~\cite{kamenica2011bayesian}. 
Later, \cite{bergemann2016bayes,bergemann2016information,bergemann2019information} highlighted the connection between optimal information disclosure and the \emph{best} Bayes correlated equilibrium from the sender's perspective.
A number of works deal with the multiple receivers generalization of the model, \emph{e.g.},~\cite{schnakenberg2015expert,taneva2015information,wang2013bayesian}.
Among these works, those by~\cite{bardhi2018modes},~\cite{alonso2016persuading} and~\cite{chan2019pivotal} are closely related to ours, representing the first attempts of applying the Bayesian persuasion framework to voting problems.
In particular,~\cite{bardhi2018modes} and~\cite{chan2019pivotal} focus on problems with binary actions and state spaces.
The former studies private Bayesian persuasion for unanimity voting, while the latter analyzes private and public persuasion with $k$-majority voting rules.
\cite{alonso2016persuading} employ a novel geometric tool to characterize an optimal public signaling scheme in a voting framework and characterize voters' preferences over electoral rules.
%
%
However,  the aforementioned works only provide the economic groundings of Bayesian persuasion in (simple) voting settings. Their characterization does not include any computational result.

\cite{dughmi2016algorithmic} are the first to analyze Bayesian persuasion from a computational perspective, in the single-receiver case.
\cite{arieli2019private} introduce the fundamental model of persuasion with no inter-agent externalities.~\footnote{
Problems with externalities are largely intractable even in very simple settings; see, e.g.,~\citep{bhaskar2016hardness,rubinstein2015honest,dughmi2018hardness}.
The no-externality assumption removes the equilibrium selection and (partially) the computation concerns arising in more general setting, and allows one to focus on the problem of coordinating receivers' actions.
} 
In the case of binary actions and state spaces, \cite{arieli2019private} provide an explicit characterization of the optimal private signaling scheme when the sender's utility function is either supermodular, anonymous submodular, or supermajority.
Moreover, \cite{arieli2019private} also provide necessary and sufficient conditions for the existence of a public signaling scheme matching the performance of the best private signaling scheme.
Other related works focusing on the no externality setting with receivers' binary action spaces are the following:
\cite{babichenko2016computational} describe a tight $(1-1/e)$-approximate private signaling scheme for monotone submodular sender's utility functions and show that an optimal private scheme for anonymous utility functions can be found in polynomial time.
%
%
\cite{dughmi2017algorithmic} generalize the model to the case of many states of nature, while assuming sender's utility to be a monotone set function.
In this setting, the supermodular and anonymous cases can be efficiently solved.
When the sender's utility is submodular, ~\cite{dughmi2017algorithmic} show that an $(1-1/e)$-approximation to the optimal revenue can be obtained by sending conditionally independent private signals to receivers.
Moreover, the authors show that it is \textsf{NP}-hard to approximate the sender's value provided by the optimal public scheme, within any constant factor.
Finally,~\cite{xu2019tractability} focuses on the complexity of public signaling when there are no inter-agent externalities and the action spaces are binary. 
Finding an optimal public signal is shown to be \emph{fixed-parameter tractable} under some non-degeneracy assumptions.
The author describes a PTAS with a bi-criteria guarantee for (monotone) submodular sender's objectives.


\section{Model}

Our model is a generalization of the fundamental special case introduced by~\cite{arieli2019private}.
It comprises a \emph{sender} and a finite set $R$ of \emph{receivers} (voters) that must choose one alternative from a set $C=\{c_0,\ldots,c_\ell\}$ of candidates (\emph{i.e.}, $C$ is the set of voters's available actions).
Each voter must choose a candidate from $C$. 
Each voter's utility depends only on her own action and the (random) state of nature, but not on the actions of other voters.
In particular, we write $u_r: C\times \Theta\to\mathbb{R}$, where $\Theta=\{\theta_i\}_{i=1}^{n}$ is the finite space of states of nature.
The value of $u_r(c,\theta)$ is a measure of voter $r$'s appreciation of candidate $c$ when the state of nature is $\theta$.
A profile of votes (\emph{i.e.}, one candidate for each voter) is denoted by $\mathbf{c}\in\times_{r\in R} C$.
In general settings, beyond voting, we denote the sender's utility when the state of nature is $\theta$ with $f_\theta:\times_{r\in R}C\to \mathbb{R}$ (here $C$ may be an arbitrary space of actions).
Furthermore, we say that $f$ is \emph{anonymous} if its value depends only on $\theta$ and on the number of players selecting each action.
In the specific context of voting, the sender's objective is maximizing the winning probability of $c_0$ (according to some voting rules).
In this setting, instead of using $f$, we denote the sender's utility function by $W: \times_{r\in R} C\to\{0,1\}$, where $W(\cdot)=1$ if $c_0$ \emph{wins}, and $W(\cdot)=0$ otherwise.
The state of nature influences the receivers' preferences but it does not affect the sender's payoff, which only depends on the final votes.
~\footnote{The sender's utility function is state-independent in many settings, \emph{e.g.}, voting~\citep{alonso2016persuading}, and marketing~\citep{candogan2019persuasion,babichenko2016computational}.}

As it is customary in Bayesian persuasion (see, \emph{e.g.}, \cite{kamenica2018bayesian}), we assume $\theta$ is drawn from a common prior $\mu\in\textnormal{int}(\Delta(\Theta))$, which is explicitly known to both sender and receivers.~%
\footnote{$\textnormal{int}(X)$ is the \emph{interior} of set $X$, and $\Delta(X)$ is the set of all probability distributions on $X$.}
Their interaction goes as follows (see Figure~\ref{fig:time_line}): the sender commits to a publicly known \emph{signaling scheme} $\phi$ that maps states of nature to \emph{signals} for the voters.
\begin{figure}
	\hspace{-.5cm}
	\includegraphics[scale=1.1]{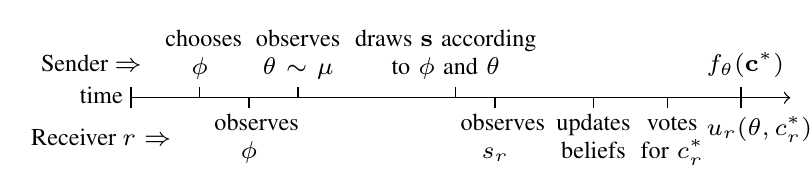}
	\caption{Interaction between the sender and a receiver.}
	\label{fig:time_line}
	\vspace{-.5cm}
\end{figure}
The \emph{signal set} of receiver $r$ is denoted by $S_r$, while $s_r \in S_r$ is a single signal for $r$. 
A profile of signals is denoted by $\mathbf{s}=(s_1,\ldots,s_{|R|})\in S$, where $S=\times_{r\in R} S_r$.
The sender observes the realized state of nature $\theta\sim\mu$, and exploits the knowledge of $\theta$ to compute 
 $\s\in S$, drawn according to $\phi$ under $\theta$.
Each voter $r$ observes $s_r$ and rationally updates her posterior over $\Theta$.
Then, each voter $r$ selects a candidate $c_r^\ast$ maximizing $r$'s expected reward.

A general signaling scheme $\phi:\Theta\to \Delta(S)$ has a private communication channel for each receiver. 
We refer to these as \emph{private signaling schemes}.
The probability with which the sender recommends $\mathbf{s}\in S$ after observing $\theta$ is $\phi(\theta,\s)$.
Therefore, it holds $\sum_{\s\in S}\phi(\theta,\s)=1$ for each $\theta$.
Let $S_{-r}= \times_{j\neq r} S_j$.
Then, $\phi_r(\theta,s_r)=\sum_{\s_{-r}\in S_{-r}} \phi(\theta,(s_r,\s_{-r}))$ is the \emph{marginal probability} with which $s_r\in S_r$ is recommended to $r$ when $\theta$ is observed.
Furthermore, we call $\phi_r$ a \emph{marginal signaling scheme}.
It is immediate to observe that the same set $\{\phi_r\}_{r\in R}$ may be obtained from more than one distribution over $S$.
A special setting is the one of \emph{public signaling schemes}, where the same signal is received by all receivers through a public communication channel (equivalently, all voters have to receive the same private message).
With an overload of notation, we denote a public signaling scheme by $\phi:\Theta\to S$, where $S$, denotes the set of public signals (the meaning of $\phi$ will be clear from the context).

A revelation-principle style argument shows that, in the case of no inter-agent externalities, there exists an optimal (private or public) signaling scheme which is both \emph{direct} and \emph{persuasive}~\citep{kamenica2011bayesian,dughmi2017algorithmic}.~\footnote{
An optimal signaling scheme always exists when the sender's expected utility is upper semicontinuous, which is always the case when receivers break ties in favor of the sender~\citep{kamenica2011bayesian}.
}
A signaling scheme is direct when signals can be mapped to actions of the receivers, and interpreted as action recommendations, \emph{i.e.}, in the voting setting, $S_r=C$ for each $r\in R$.
We say that a signaling scheme is \emph{persuasive} if following recommendations is an equilibrium of the underlying \emph{Bayesian game}~\citep{bergemann2016bayes,bergemann2016information}.
Analogously, a \emph{marginal signaling scheme} $\phi_r$ is persuasive if $r$ has no incentives in deviating from its recommendations.
When not specified, we assume to work with direct signaling schemes.
Moreover, when the sender's signaling scheme $\phi$ is direct and persuasive we write $W(\phi)$ to denote the sender's expected utility.
Finally, function $\delta: S\times C\to \mathbb{N}$ is s.t. $\delta(\s,c)$ is the number of voters that are recommended $c$ by $\s$.


To further clarify the notion of Bayesian persuasion in voting problems, we provide the following simple example.
\begin{example}
	There are three voters $R=\{1,2,3\}$ that have to choose one among two candidates $C=\{c_0,c_1\}$.
	The sender (\emph{e.g.}, a politician or a lobbyist) is interested in having $c_0$ elected, and can observe the realized state of nature, drawn uniformly from $\Theta=\{A,B,C\}$.
	The state of nature describes the position of $c_0$ on a matter of particular interest to the voters.
	Moreover, voters perceive $c_1$ in a slightly negative way, independently of the state of nature.
	Table~\ref{tab:example} describes the utility of the three voters.
	%

	\renewcommand{\arraystretch}{0.5}\setlength{\tabcolsep}{1pt}
	\begin{table}[h]
	\begin{minipage}{.49\linewidth}
			\scriptsize
			\centering
			\begin{tabular}{rl||ccr|lccr|lccr}
				\toprule
				&&& \multicolumn{2}{c}{State $A$} &&& \multicolumn{2}{c}{State $B$} &&& \multicolumn{2}{c}{State $C$} \\
				&&& $c_0$ & $c_1$ &&& $c_0$ & $c_1$ &&& $c_0$ & $c_1$ \\
				\midrule
				\multirow{3}{*}{\rotatebox[origin=c]{90}{Voters}} &1 && $+1$ & $-1/4$ &&& $-1$ & $-1/4$ &&& $-1$ & $-1/4$\\
				&2 && $-1$ & $-1/4$ &&& $+1$ & $-1/4$ &&& $-1$ & $-1/4$\\
				&3 && $-1$ & $-1/4$ &&& $-1$ & $-1/4$ &&& $+1$ & $-1/4$\\
				\bottomrule
			\end{tabular}
		\caption{Payoffs from voting different candidates.}
		\label{tab:example}
	\end{minipage}
	\hspace{1cm}
	\begin{minipage}{.3\linewidth}
			\centering
			\scriptsize
		\begin{tabular}{lr||lcr lcr lcr}
				\toprule
				&& \multicolumn{8}{c}{Signals}&\\
				&&& \textsf{not A} &&& \textsf{not B} &&& \textsf{not C}&\\
				\midrule
				\multirow{3}{*}{\rotatebox[origin=c]{90}{States}}&$A$ && 0 &&& $1/2$ &&& $1/2$ &\\
				&$B$ && $1/2$ &&& 0 &&& $1/2$ &\\
				&$C$ && $1/2$ &&& $1/2$ &&& 0 &\\
				\bottomrule
		\end{tabular}
			\caption{Optimal signaling scheme.}
			\label{tab:example_scheme}
	\end{minipage}
	\end{table}
	We consider a simple-majority voting rule. 
	Without any form of signaling (or with a completely uninformative signal), all voters would choose $c_1$ because it provides an expected utility of $-1/4$, against $-1/3$.
	This would clearly be the worst possible outcome for the sender, \emph{i.e.}, $W((c_1,c_1,c_1))=0$.
	The sender would still get 0 utility with a fully informative (private) signal, since two out of three receivers would pick $c_1$. 
	However, the sender can design a public signaling scheme granting herself utility 1 for each state of nature.~\footnote{Since a public signaling scheme is a special case of private signaling, the same result could be achieved in the latter scenario.}
	An optimal signaling scheme should convince two out of the three receivers to vote for $c_0$. 
	Table~\ref{tab:example_scheme} describes one such scheme with generic signals.
	%
	%
	Suppose the observed state is $A$, and that the signal is \textsf{not B} (sampled uniformly from $\{\textsf{not B},\textsf{not C}\}$).
	Then, the posterior distribution over the states of nature is $(1/2, 0, 1/2)$. 
	Therefore, both player 1 and 3 would vote for $c_0$, since their expected utility would be 0 against $-1/4$.
	The sender's payoff would be $W((c_0,c_1,c_0))=1$, and the same happens for any $\theta\in\Theta$.
	An equivalent direct signaling scheme would publicly reveal a tuple of candidates' suggestions.
	For example, \textsf{not A} would become $(c_1,c_0,c_0)$.
\end{example}

We consider two commonly adopted voting rules: \emph{k-voting rule} and \emph{plurality voting rule}~(see, \emph{e.g.}, \cite{brandt2016handbook}).
%
%
In an election with a \emph{k-voting rule} each voter chooses a candidate after observing the sender's signal.
Candidate $c_i$ is elected if it receives at least $k$ votes, where $k\in [|R|]$ is the established electoral rule.~\footnote{We denote by $[n]$ the set $\{1,\ldots,n\}$.}
The problem of designing the optimal sender's persuasive signaling scheme under a $k$-voting rule is denoted by \textsf{K-V}.
In an election with \emph{plurality voting rule} the winner is determined as the candidate with a plurality (greatest number) of votes.
The problem of finding an optimal persuasive signaling scheme for the sender with plurality voting is denoted by \textsf{PL-V}.
In both settings, we focus on maximizing the winning probability of the sender.
The problem can be written as the optimization problem: $\max_{\phi}\sum_{\theta\in\Theta,\s\in S}\mu(\theta)\phi(\theta,\s)W(\s)$, subject to $\phi$ being persuasive for each voter.
An approximate solution for this problem is a signaling scheme that is persuasive, but guarantees to the sender a sub-optimal expected utility.
%
%

\section{Private Signaling with $k$-Voting Rules}\label{sec:private_k}

In this section, we show that a solution to \textsf{K-V} (\emph{i.e.}, finding an optimal persuasive signaling scheme under a $k$-voting rule) can be found in polynomial time when the sender can employ a private signaling scheme.

First, we show that the sender can restrict the choice of a signaling scheme to the set of the schemes $\phi$ whose marginal signaling schemes are Pareto efficient on the set $\{\phi_r(\theta,c_0)\}_{\theta\in\Theta,r\in R}$ (Lemma~\ref{lemma:pareto}), and recommend with positive probability either $c_0$ or the candidate giving $r$ the best utility under $\theta$ (Lemma~\ref{lemma:positive_prob}).

\begin{lemma}\label{lemma:pareto}
	Given a signal $\phi'$ and a set of persuasive marginal signaling schemes $\{\phi_r\}_{r\in R}$, if $\phi_r(\theta,c_0)\geq\phi'_r(\theta,c_0)$ for each $r\in R$ and $\theta\in\Theta$, there exists a persuasive signaling scheme $\phi$ such that $W(\phi)\geq W(\phi')$.
\end{lemma}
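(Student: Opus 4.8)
The plan is to exploit two structural features of the $k$-voting setting: first, that persuasiveness is a property of the marginals alone; and second, that the sender's utility $W$ depends on a recommendation profile $\s$ only through $\delta(\s,c_0)$, the number of voters recommended $c_0$. Given these two reductions, the lemma collapses to a monotone coupling argument: I build a joint, direct, persuasive scheme $\phi$ with marginals exactly $\{\phi_r\}_{r\in R}$ whose $c_0$-count stochastically dominates that of $\phi'$, state by state.

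First I would argue that persuasiveness is determined by the marginals. Because there are no inter-agent externalities, the incentive constraint of a voter $r$ who is recommended candidate $c$ involves only her posterior $\Pr[\theta\mid s_r=c]$, and Bayes' rule gives $\Pr[\theta\mid s_r=c]=\mu(\theta)\phi_r(\theta,c)/\sum_{\theta'}\mu(\theta')\phi_r(\theta',c)$, which depends on the joint scheme solely through the marginal $\phi_r$. Hence any joint scheme whose marginals coincide with the persuasive marginals $\{\phi_r\}$ is itself persuasive, so constructing such a $\phi$ automatically discharges the persuasiveness requirement and leaves me only to control $W(\phi)$.

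Next, treating $\phi'$ as a direct scheme and writing $p_r^\theta:=\phi_r(\theta,c_0)\geq\phi'_r(\theta,c_0)=:q_r^\theta$, I fix a state $\theta$ and build the joint law over recommendation profiles as follows. Let $X'_r\in\{0,1\}$ indicate whether $\phi'$ recommends $c_0$ to $r$ in state $\theta$, so the $X'_r$ carry the joint law induced by $\phi'(\theta,\cdot)$ with $\Pr[X'_r=1]=q_r^\theta$. I define indicators $X_r$ by a monotone coupling: conditionally on $(X'_r)_{r}$, independently set $X_r=1$ whenever $X'_r=1$, and when $X'_r=0$ set $X_r=1$ with probability $(p_r^\theta-q_r^\theta)/(1-q_r^\theta)$ (if $q_r^\theta=1$ then $p_r^\theta=1$ and $X_r=X'_r=1$). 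A one-line computation gives $\Pr[X_r=1]=p_r^\theta$, matching the $c_0$-component of $\phi_r$, while by construction $X_r\geq X'_r$ pointwise. I then extend each profile to a full recommendation in $C$ by drawing, for every $r$ with $X_r=0$, a non-$c_0$ candidate from $\phi_r(\theta,\cdot)$ conditioned on $\{c\neq c_0\}$; this makes the marginals of $\phi$ equal to $\{\phi_r\}$ exactly and leaves $\delta(\cdot,c_0)$ untouched.

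Finally I would combine the two reductions. Since $X_r\geq X'_r$ for every $r$, we have $\delta(\s,c_0)=\sum_r X_r\geq\sum_r X'_r$ in every realization, whence $\mathds{1}[\sum_r X_r\geq k]\geq\mathds{1}[\sum_r X'_r\geq k]$; taking expectations over the coupling and averaging over $\theta$ with weights $\mu(\theta)$ yields $W(\phi)\geq W(\phi')$. I expect the main obstacle to be conceptual rather than computational: the crux is to notice that persuasiveness never constrains the correlation structure across voters but only the per-voter marginals, which is exactly what frees the sender to re-couple the recommendations of the dominating marginals $\{\phi_r\}$ into a joint scheme whose $c_0$-count dominates that of $\phi'$ without violating any incentive constraint.
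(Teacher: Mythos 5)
Your proof is correct, and it reaches the lemma by a genuinely different construction than the paper's. Both arguments rest on the same two load-bearing observations: that without inter-agent externalities persuasiveness constrains only the per-receiver marginals (so any joint scheme with marginals $\{\phi_r\}$ is automatically persuasive), and that under a $k$-voting rule $W$ depends on a profile only through $\delta(\s,c_0)$ and is monotone in it. The paper, however, builds the dominating joint scheme deterministically and iteratively: it processes receivers one at a time and, within each block $\s_{-r}$, greedily shifts probability mass onto the coordinate $(c_0,\s_{-r})$ (capped by the block's total mass, with the residual deficit $\Delta_r^i$ tracked across blocks in an arbitrary ordering of $S_{-r}$), redistributing the leftover block mass over $c\neq c_0$ proportionally to $\phi_r(\theta,\cdot)$; it then verifies algebraically that block masses are preserved, that receiver $r$'s marginal becomes exactly $\phi_r$, and---via a three-case analysis on $\delta(\s_{-r},c_0)$ being $<k-1$, $=k-1$, or $>k-1$---that the sender's utility never decreases. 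Your monotone coupling replaces all of that bookkeeping: conditionally independent Bernoulli upgrades give $X_r\geq X'_r$ realization by realization, hence $\mathds{1}\left[\sum_r X_r\geq k\right]\geq\mathds{1}\left[\sum_r X'_r\geq k\right]$ pointwise, and the marginal check is a one-line computation. What your route buys is brevity and transparency (no case analysis, no deficit accounting), plus the fact that it visibly works verbatim for any sender utility nondecreasing in $\delta(\s,c_0)$, not just the $k$-voting indicator. What the paper's route buys is an explicit closed-form expression for the new probabilities $\phi(\theta,\s)$, in the spirit of the paper's algorithmic emphasis, whereas your $\phi$ is defined as the law of a sampling procedure whose explicit probabilities would require summing over upgrade patterns; for Lemma~\ref{lemma:pareto} itself, which is invoked only existentially, this difference is immaterial.
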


\begin{proof}
	Be given a signaling scheme $\phi'$ and a set of persuasive marginal signaling schemes $\{\phi_r\}_{r\in R}$ s.t. $\phi_r(\theta,c_0)\geq \phi_r'(\theta,c_0)$ for each $r\in R$, $\theta\in \Theta$.
	Intuitively, we show that it is possible to move probability mass to $c_0$ while guaranteeing persuasiveness with the following iterative procedure.

	Let $\phi^0=\phi'$. 
	Then, we iterate over $r\in [|R|]$, and update the signaling scheme with the following procedure.
	Let $A_r$ be an arbitrary mapping from $[|S_{-r}|]$ to $S_{-r}$, which serves as an arbitrary ordering of elements in $S_{-r}$ (\emph{i.e.}, $A_r(i)$ returns the $i$-th element of $S_{-r}$ in such ordering).
	Moreover, for each $\theta\in \Theta$, we define $\Delta_r^{0}(\theta)=\phi_r(\theta,c_0) -\phi'_r(\theta,c_0)$.
	For each $r$, we iterate over $i\in[|S_{-r}|]$, and perform the following updates:
	$\s_{-r}=A_r(i)$, 
	\begin{multline}\label{eq:lemma1_1}\phi^r(\theta,(c_0,\s_{-r}))=\min\bigg\{\phi^{r-1}(\theta,(c_0,\s_{-r}))+\Delta_r^{i-1}(\theta),\\ \sum_{c \in C} \phi^{r-1}(\theta,(c_0,\s_{-r}))\bigg\},\end{multline}
	and
	\begin{multline*}
	\Delta_r^i(\theta)= \Delta_r^{i-1}(\theta) -\phi^r(\theta,(c_0,\s_{-r}))+ \phi^{r-1}(\theta,(c_0,\s_{-r})),
	\end{multline*}
	where $\phi^r(\theta,(c,\s_{-r}))$ is the probability of recommending $c$ to $r$ and $\s_{-r}$ to the other receivers, under $\theta$ (at iteration $r$).
	Finally, for each $\s_{-r}$, and $c \neq c_0$, set: 

	\begin{multline*}\label{eq:lemma1_2}\phi^r(\theta,(c,\s_{-r}))= \\= \frac{\phi_r(\theta,c)\bigg(\sum_{c' \in C}\phi^{r-1}(\theta(c',\s_{-r})) - \phi^r(\theta,(c_0,\s_{-r}))\bigg)}{\sum_{c' \in C \setminus \{c_0\}} \phi_r(\theta,c')},
	\end{multline*}
	the numerator is well-defined because of the minimization in Equation~\ref{eq:lemma1_1}.
	After having enumerated all the receivers, we obtain $\phi^{|R|}$.
	We show that $\phi=\phi^{|R|}$ is precisely the desired signaling scheme.
	%
	First, we show that, at each iteration $r$, $\phi^r$ is well formed.
	For each iteration $r$, and pair $(\theta,\s_{-r})$, we show that $\sum_{c \in C} \phi^{r}(\theta,(c,\s_{-r}))=\sum_{c \in C} \phi^{r-1}(\theta,(c,\s_{-r}))$.
	We have: $\sum_{c \in C} \phi^{r}(\theta,(c,\s_{-r}))= \phi^{r}(\theta,(c_0,\s_{-r}))+ \sum_{c \in C\setminus\{c_0\} } \phi^{r}(\theta,(c,\s_{-r}))$.
	Then, by expanding $\phi^{r}(\theta,(c,\s_{-r}))$ via the update rule, we obtain:
	\begin{multline*}
	\sum_{c \in C} \phi^{r}(\theta,(c,\s_{-r}))=\\= \phi^{r}(\theta,(c_0,\s_{-r})) + \sum_{c \in C } \phi^{r-1}(\theta,(c,\s_{-r})) - \phi^{r}(\theta,(c_0,\s_{-r})),
	\end{multline*}
	which is precisely $\sum_{c \in C} \phi^{r-1}(\theta,(c,\s_{-r}))$.
	This implies that $\sum_{\s \in S} \phi^r(\theta, \s)=1$, and that receiver $r$'s marginal probabilities are modified only at iteration $r$.
	Now, we show that receiver $r$'s marginals are updated correctly.
	We distinguish the following two cases. 
	
	i) It is easy to see that, for candidate $c_0$,
	\begin{multline*}
	\sum_{\s_{-r} \in S_{-r}}
	 \phi^r(\theta,(c_0,\s_{-r}))=\\
	 = \Delta_r^0(\theta) + \sum_{\s_{-r} \in S_{-r}} \phi^{r-1}(\theta,(c_0,\s_{-r}))= \phi_r(\theta,c_0).
	\end{multline*}

	ii) For each candidate $c\neq c_0$, we have:
	\begin{align*}
	& \sum_{\s_{-r} \in S_{-r}} \phi^r(\theta,(c,\s_{-r})) =\\
	& = \hspace{-.3cm}\sum_{\s_{-r} \in S_{-r}}\hspace{-.3cm}  \frac{\phi_r(\theta,c) \bigg(\displaystyle\sum_{c' \in C}\phi^{r-1}(\theta(c,\s_{-r})) - \phi^r(\theta,(c_0,\s_{-r}))\bigg)}{\displaystyle\sum_{c' \in C\setminus\{c_0\}} \phi_r(\theta,c')} =\\
	& = \frac{\phi_r(\theta, c)\bigg( \displaystyle\sum_{\s \in S} \phi^{r-1}(\theta,\s) - \sum_{\s_{-r} \in S_{-r}}  \phi^r(\theta,(c_0,\s_{-r}))\bigg)}{\displaystyle\sum_{c' \in C\setminus\{c_0\}} \phi_r(\theta, c')}=\\
	& = \frac{\phi_r(\theta,c)(1-\phi_r(\theta, c_0))}{\displaystyle\sum_{c' \in C\setminus\{c_0\}} \phi_r(\theta,c')}=\phi_r(\theta,c).
	\end{align*}
	Since $\{\phi_r\}_{r\in R}$ are persuasive, also the new signaling scheme $\phi$ is persuasive.
	Finally, we show that the new signaling scheme does not decrease sender's expected utility. 
	Let $S^\ast=\{\s\in S|\delta(\s,c_0)\geq k\}$ be the set of joint signals recommending $c_0$ to more than $k$ voters (under a $k$-voting rule).
	Then, $W(\phi)=\sum_{\theta\in\Theta}\mu(\theta) \sum_{\s\in S^\ast}\phi(\theta,\s)$.
	It is enough to show that, for each iteration $r$, for each $\theta\in \Theta$, and, for each $\s_{-r}\in S_{-r}$, it holds
	\[
	\sum_{c\in C}\left(\phi^r(\theta,(c,\s_{-r}))-\phi^{r-1}(\theta,(c,\s_{-r}))\right) \mathds{1}_{(c,\s_{-r})\in S^\ast}\geq 0.
	\] 
	We distinguish three cases. 
	i) When $\delta(\s_{-r},c_0)<k-1$, a change in $r$'s marginal probabilities does not affect the sender's winning probability,  term $\mathds{1}_{(c,\s_{-r})\in S^\ast}$ being always $0$. 
	ii) When $\delta(\s_{-r},c_0)= k-1$, $\mathds{1}_{(c,\s_{-r})\in S^\ast}=1$ only if $c=c_0$, and
	$\phi^r(\theta,(c_0,\s_{-r}))\ge \phi^{r-1}(\theta,(c_0,\s_{-r}))$.
	iii) When $\delta(\s_{-r},c_0)>k-1$, $\mathds{1}_{(c,\s_{-r})\in S^\ast}$ is always $1$, and we already know that $\sum_{c\in C}\left(\phi^r(\theta,(c,\s_{-r})) - \phi^{r-1}(\theta,(c,\s_{-r}))\right)=0$. 
	This concludes the proof.
\end{proof}

We now state the next lemma.

\begin{lemma}\label{lemma:positive_prob}
	There always exists a solution to \textsf{K-V} in which, for all $r\in R$ and $\theta\in \Theta$, $\phi_r(\theta,c)>0$ if and only if one of the following two conditions is satisfied: 
	\begin{itemize}[topsep=-1mm,itemsep=-1mm]
		\item $c=c_0$,
		\item $c\in\argmax_{c'\in C} u_r(\theta,c')$.
	\end{itemize}
\end{lemma}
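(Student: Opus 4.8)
The plan is to prove the two implications separately, working at the level of \emph{marginal} signaling schemes and then lifting to a joint scheme via Lemma~\ref{lemma:pareto}. Throughout I exploit the fact that the persuasiveness constraints of a marginal scheme $\phi_r$ are \emph{linear} in $\phi_r$: for each recommended $c$, persuasiveness reads $\sum_{\theta}\mu(\theta)\phi_r(\theta,c)\bigl(u_r(c,\theta)-u_r(c',\theta)\bigr)\ge 0$ for all $c'\in C$. Two consequences drive the whole argument: a convex combination of two persuasive marginal schemes is again persuasive, and a candidate recommended \emph{only} in states where it is optimal is automatically persuasive, since every summand is then nonnegative.

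For the \emph{only if} direction, start from any optimal scheme $\phi^\ast$ with marginals $\phi^\ast_r$. For each $r$ and $\theta$ I keep the mass on $c_0$ fixed, setting $\phi_r(\theta,c_0)=\phi^\ast_r(\theta,c_0)$, and I reallocate the remaining mass $1-\phi^\ast_r(\theta,c_0)$ onto $\argmax_{c'\in C}u_r(c',\theta)$. The resulting $\phi_r$ is persuasive: the constraint for $c_0$ is unchanged, and every other recommended candidate is a state-$\theta$ maximizer wherever it is sent, so its constraint holds termwise. Since $\phi_r(\theta,c_0)\ge\phi^\ast_r(\theta,c_0)$ (with equality), Lemma~\ref{lemma:pareto} yields a persuasive joint scheme with these marginals and $W\ge W(\phi^\ast)$, hence optimal, whose support is contained in $\{c_0\}\cup\argmax_{c'\in C}u_r(c',\theta)$.

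For the \emph{if} direction I must additionally guarantee that $c_0$ and \emph{every} maximizer receive strictly positive probability. For the maximizers this is immediate: rather than concentrating the non-$c_0$ mass on a single maximizer, I spread $1-\phi_r(\theta,c_0)$ strictly positively over all of $\argmax_{c'\in C}u_r(c',\theta)\setminus\{c_0\}$; this leaves the $c_0$-marginals untouched, so persuasiveness again holds termwise and Lemma~\ref{lemma:pareto} preserves optimality. For $c_0$ I perturb: I convex-combine the current marginals with a persuasive marginal scheme $\rho_r$ that recommends $c_0$ with positive probability in every state and whose support lies in $\{c_0\}\cup\argmax$. By linearity of the incentive constraints, $(1-\epsilon)\phi_r+\epsilon\rho_r$ stays persuasive for every $\epsilon\in[0,1]$, it places strictly positive mass on $c_0$ in all states, and it keeps the support inside $\{c_0\}\cup\argmax$; choosing $\rho_r$ so that its $c_0$-marginals weakly dominate those of $\phi_r$ lets Lemma~\ref{lemma:pareto} certify that $W$ does not decrease.

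The main obstacle is the $c_0$ half of the \emph{if} direction, namely exhibiting the auxiliary scheme $\rho_r$: I must show that $c_0$ can be recommended to $r$ with positive probability in \emph{every} state while (i) respecting persuasiveness and (ii) not lowering the $c_0$-marginals below $\phi^\ast_r(\cdot,c_0)$. This is a linear-feasibility question, and here the full-support prior $\mu\in\textnormal{int}(\Delta(\Theta))$ is what gives room to place a little $c_0$-mass in every state; the delicate point is that when the $c_0$-persuasiveness constraint is tight, adding mass in a state where $c_0$ is suboptimal must be offset by adding mass in states where $c_0$ is favourable, which is feasible precisely when some optimal scheme already recommends $c_0$ to $r$. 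Equivalently, one may select the optimal scheme lying in the relative interior of the optimal face of the underlying linear program, whose support is the union of the supports of all optimal schemes; verifying that this maximal support coincides with $\{c_0\}\cup\argmax_{c'\in C}u_r(c',\theta)$ is the crux of the argument.
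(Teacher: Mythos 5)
Your first half is essentially the paper's proof: starting from an optimal $\phi^\ast$, you fix the $c_0$-marginals, push all residual mass $1-\phi^\ast_r(\theta,c_0)$ onto state-$\theta$ maximizers, observe that the $c_0$ incentive constraint depends only on the (unchanged) $c_0$-marginals while every other recommended candidate satisfies its constraint termwise, and then lift via Lemma~\ref{lemma:pareto}. The only cosmetic difference is that the paper concentrates the residual mass on a \emph{single} selected maximizer $c^\ast\in\argmax_{c\in C}u_r(\theta,c)$ rather than spreading it over all of them; both choices work for this direction.

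The gap is the ``if'' half, and it cannot be closed, because the biconditional you are trying to prove is false when read literally. Take $C=\{c_0,c_1\}$ and a receiver $r$ with $u_r(\theta,c_0)=-1$ and $u_r(\theta,c_1)=0$ in \emph{every} state. Persuasiveness of any marginal scheme requires $\sum_{\theta}\mu(\theta)\phi_r(\theta,c_0)\left(u_r(\theta,c_0)-u_r(\theta,c_1)\right)=-\sum_{\theta}\mu(\theta)\phi_r(\theta,c_0)\geq 0$, which, since $\mu\in\textnormal{int}(\Delta(\Theta))$ has full support, forces $\phi_r(\cdot,c_0)\equiv 0$. Hence no persuasive scheme at all --- in particular no auxiliary $\rho_r$ recommending $c_0$ with positive probability in every state, and not even the maximal-support scheme in the relative interior of the optimal face that you invoke --- can put $c_0$ in the support for this receiver, so the condition ``$c=c_0\Rightarrow\phi_r(\theta,c_0)>0$'' is unachievable. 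The same problem occurs with ties: when $\argmax_{c'\in C}u_r(\theta,c')$ contains several candidates, the paper's own construction deliberately sets $\phi_r(\theta,c)=0$ for every maximizer except the one selected $c^\ast$, so even the paper's witness violates the literal ``if'' direction. The lemma's ``if and only if'' is loose phrasing: what the paper proves, and what Theorem~\ref{th:poly-k-v} actually uses, is only the support-containment direction together with the structural form $\phi_r(\theta,c^\ast)=1-\phi_r(\theta,c_0)$. Your closing paragraph correctly flags this step as the crux but leaves it as an unproven linear-feasibility claim; the repair is not to complete it but to drop it, after which your argument coincides with the paper's.
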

\begin{proof}
	Given a persuasive signaling scheme $\phi'$, we show that it is possible to build a collection $\{\phi_r\}_{r\in R}$ with the property above, s.t. $\phi_r(\theta,c_0)\geq \phi'_r(\theta,c_0)$ for each $r\in R$, $\theta\in \Theta$.
	This, together with Lemma~\ref{lemma:pareto}, proves our result.
	We build $\phi$ iteratively. 
	For each pair $(\theta,r)$, select $c^\ast\in\argmax_{c\in C}u_r(\theta, c)$, and set $\phi_r(\theta,c^\ast)=1-\phi'_r(\theta, c_0)$, $\phi_r(\theta,c_0)=\phi_r'(\theta,c_0)$, and $\phi_r(\theta,c)=0$ for each other $c\in C\setminus \{c_0,c^\ast\}$.
	It is immediate to see that, for each $\theta$ and $r$, $\sum_{c\in C}\phi_r(\theta,c)=1$.
	Next, we show that each $\phi_r$ is persuasive, i.e., $\sum_{\theta\in\Theta}\mu(\theta)\phi_r(\theta,c)\left(u_r(\theta,c)-u_r(\theta,c')\right)\geq 0$ for each $r\in R$, and $c,c'\in C$.
	If $c=c_0$, we have $\phi_r(\theta,c_0)>\phi'_r(\theta,c_0)$ only if $c_0\in \argmax_{c\in C} u_r(\theta,c)$, which means $\left(u_r(\theta,c_0)-u_r(\theta,c')\right)\geq 0$, in the remaining cases we have $\phi_r(\theta,c_0)=\phi_r'(\theta,c_0)$.
	If $c\neq c_0$, $c\in\argmax_{c'\in C} u_r(\theta,c')$ for each $\theta\in\Theta$ with $\phi(\theta,c)>0$, which makes the incentive constraint satisfied.
\end{proof}


By exploiting Lemma~\ref{lemma:positive_prob}, we show that an optimal persuasive signaling scheme under a $k$-voting rule can be computed in polynomial time via the following linear program (LP).
Let $\beta_\theta\in\mathbb{R}$ be the probability with which $k$ voters vote for $c_0$ with state $\theta$.
Then, we can compute an optimal solution to \textsf{K-V} as follows (the proof is provided below):
\begin{subequations}\label{eq:lp1}
	\begin{align}
		\max_{\substack{\beta\in [0,1]^{|\Theta|},z\in\mathbb{R}_{-}^{|\Theta|\times k\times |R|}\\
				t,q\in\mathbb{R}^{|\Theta|\times k}\\ \phi_r(\cdot,c_0)\in[0,1]^{|R\times \Theta|}\\ }} & \sum_{\theta\in\Theta}\mu(\theta)\beta_\theta \label{eq:lp1_obj}\\ & 	\hspace{-2.6cm}\textnormal{s.t.}\sum_{\theta\in\Theta}\mu(\theta)\phi_r(\theta,c_0)\left(u_r(\theta,c_0)-u_r(\theta,c)\right)\geq 0 \label{eq:lp1_persuasive}\\
		\nonumber & \hspace{1.15cm}\forall r\in R,\forall c\in C\setminus \{c_0\} \\
		& \hspace{-2.4cm}\beta_\theta\leq\frac{1}{k-m} q_{\theta,m} \hspace{0.05cm}\forall\theta\in\Theta, \forall m\in\{0,\ldots,k-1\}\label{eq:lp1_beta_ub}\\
		& \hspace{-2.4cm}q_{\theta,m}\leq (|R|-m)t_{\theta,m}+\sum_{r\in R}z_{\theta,r,m}\label{eq:lp1_q_ub}\\
		\nonumber & \hspace{0.35cm}\forall \theta\in\Theta,\forall m \in \{0,\ldots,k-1\}\\
		& \hspace{-2.4cm}\phi_r(\theta,c_0)\geq t_{\theta,m} + z_{\theta,m,r} \label{eq:lp1_phi_lb}\\
		& \nonumber \hspace{-0.85cm}\forall r\in R,\forall \theta\in\Theta, \forall m\in \{0,\ldots,k-1\}.
	\end{align}
\end{subequations}  
\

This formulation allows us to state the following:
\begin{theorem}\label{th:poly-k-v}
	It is possible to compute an optimal persuasive private signaling scheme for \textsf{K-V} in \emph{poly($n$, $\ell$, $|R|$)} time. 
\end{theorem}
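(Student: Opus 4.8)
The plan is to use Lemmas~\ref{lemma:pareto} and~\ref{lemma:positive_prob} to reduce \textsf{K-V} to a choice of marginals together with a per-state \emph{max-coupling} problem, and then to argue that the linear program~\eqref{eq:lp1} is a compact encoding of this problem. First I would note that, by Lemma~\ref{lemma:positive_prob}, it suffices to optimize over schemes in which every receiver $r$ is recommended only $c_0$ or her best candidate $c^\ast\in\argmax_{c} u_r(\theta,c)$; for such schemes the persuasiveness of $\phi_r$ is automatic on the $c^\ast$-recommendations and reduces, on the $c_0$-recommendations, exactly to constraint~\eqref{eq:lp1_persuasive}. Writing $p^\theta_r:=\phi_r(\theta,c_0)$, Lemma~\ref{lemma:pareto} shows that the sender's value depends on the scheme only through the marginals $p^\theta$ (monotonically) and that any persuasive marginals are realizable. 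Hence the winning probability in state $\theta$ equals $g_\theta(p^\theta)$, the value of the problem: over all joint distributions of $\{X_r\}_{r\in R}$ with $X_r\in\{0,1\}$ and $\mathbb{E}[X_r]=p^\theta_r$, maximize $\Pr[\sum_r X_r\ge k]$. The sender's problem thus becomes $\max_{p}\sum_\theta\mu(\theta)\,g_\theta(p^\theta)$ subject to~\eqref{eq:lp1_persuasive}, which I call \textsf{RED}.

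The technical core is a closed form for $g_\theta$, which I would prove equals $\min\{1,\min_{0\le m\le k-1}\tfrac{1}{k-m}\sigma_{|R|-m}(p^\theta)\}$, where $\sigma_j(p)$ denotes the sum of the $j$ smallest entries of $p$. The inequality $g_\theta(p)\le\tfrac{1}{k-m}\sigma_{|R|-m}(p)$ is the easy direction: letting $M$ be the set of the $m$ largest marginals, on the event $\{\sum_r X_r\ge k\}$ at least $k-m$ successes lie outside $M$, so $(k-m)\Pr[\sum_r X_r\ge k]\le\mathbb{E}[\sum_{r\notin M}X_r]=\sigma_{|R|-m}(p)$. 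The matching lower bound is where I expect the main difficulty: I would exhibit an explicit coupling with the prescribed marginals placing probability $g_\theta(p)$ on outcomes having at least $k$ successes, e.g.\ by a water-filling / circular-assignment construction that overlaps the masses $p_r$ so as to maximize the number of simultaneous $k$-success events, checking by a case analysis on the minimizing index $m$ that no success mass is ``wasted'' below the stated bound. Composing these per-state couplings independently while preserving the marginals (hence persuasiveness, which depends only on marginals) is precisely the step that turns a value-attaining $p$ into a feasible signaling scheme, establishing $\textsf{RED}=\mathrm{OPT}$.

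Finally I would verify that~\eqref{eq:lp1} computes $\textsf{RED}$. Optimizing the auxiliary variables in~\eqref{eq:lp1_q_ub}--\eqref{eq:lp1_phi_lb}, the constraints $z_{\theta,r,m}\le 0$ and $z_{\theta,r,m}\le p^\theta_r-t_{\theta,m}$ force $z_{\theta,r,m}=\min(0,p^\theta_r-t_{\theta,m})$ at optimum, so $q_{\theta,m}=\max_{t}\big[(|R|-m)t+\sum_r\min(0,p^\theta_r-t)\big]$; a short analysis of this piecewise-linear function of $t$ (slope $|R|-m-|\{r:p^\theta_r\le t\}|$) shows the maximum is $\sigma_{|R|-m}(p^\theta)$, attained at $t=p^\theta_{(|R|-m)}$. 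Combined with~\eqref{eq:lp1_beta_ub} and $\beta_\theta\le1$, feasibility of the LP is equivalent to $\beta_\theta\le g_\theta(p^\theta)$ with $p$ persuasive, giving $\mathrm{LPV}\le\textsf{RED}=\mathrm{OPT}$; conversely, the marginals and per-state winning probabilities of an optimal (Lemma~\ref{lemma:positive_prob}-form) scheme form a feasible LP point of value $\mathrm{OPT}$ (feasibility following from the easy upper bound), so $\mathrm{LPV}\ge\mathrm{OPT}$, whence $\mathrm{LPV}=\mathrm{OPT}$. Polynomiality is then immediate: the program has $O(n\,|R|\,k)$ variables and constraints with $k\le|R|$, so it is solvable in $\mathrm{poly}(n,\ell,|R|)$ time.
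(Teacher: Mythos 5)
Your proposal is correct and follows essentially the same route as the paper: restrict attention via Lemma~\ref{lemma:positive_prob} to schemes recommending each receiver only $c_0$ or her favorite candidate, reduce the sender's problem to choosing persuasive marginals $\phi_r(\theta,c_0)$, express the per-state winning probability as $\min\bigl\{1,\min_{0\le m\le k-1}\tfrac{1}{k-m}\,q_{\theta,m}\bigr\}$ with $q_{\theta,m}$ the sum of the $|R|-m$ smallest marginals, and encode those partial sums through the auxiliary variables $t,z$ in LP~\eqref{eq:lp1}. The one step you leave as a sketch---the existence of a per-state coupling actually attaining this bound, which you correctly flag as the main difficulty---is precisely the step the paper does not prove either: it invokes Lemma~3 of \citep{arieli2019private}, and the water-filling/circular-assignment construction you outline is indeed how that lemma is proved, so nothing in your plan would fail. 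In fact your write-up is more explicit than the paper's at two points: the upper-bound direction (the argument via the $m$ largest marginals) and the piecewise-linear analysis showing that constraints~\eqref{eq:lp1_q_ub}--\eqref{eq:lp1_phi_lb} force $q_{\theta,m}$ to equal the desired partial sum, which the paper only justifies by gesturing at the dual of a small auxiliary LP.
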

\begin{proof}
	Formulation~\ref{eq:lp1} has a polynomial number of variables and constraints.
	Then, proving Theorem~\ref{th:poly-k-v} amounts to show that a solution to Formulation~\ref{eq:lp1} is also a solution to \textsf{K-V}.
	
	Let $c_{\theta,r}^\ast=\argmax_{c\in C}u_r(\theta,c)$, for each $\theta$ and $r$.
	First, by Lemma~\ref{lemma:positive_prob}, the space of available signals can be restricted to those in which, for each $r$ and $\theta$, only $\phi_r(\theta,c_0)$ and $\phi_r(\theta,c^\ast_{\theta,r})$ are $>0$, and $\phi_r(\theta,c^\ast_{\theta,r})=1-\phi_r(\theta,c_0)$.
	Constraints~\eqref{eq:lp1_persuasive} are the incentive constraints for action $c_0$.
	For any $c\neq c_0$, the incentive constraints are satisfied by construction.
	Objective~\eqref{eq:lp1_obj} is given by the sum over all $\theta\in\Theta$ of the prior of state $\theta$, multiplied by the probability of having at least $k$ vote for $c_0$ given $\theta$.
	We need to show the correctness of $\beta_\theta$.
	For each state of nature the maximum probability with which at least $k$ receivers play $c_0$ is given by:
	\[
	\beta_\theta=\min\left\{\min_{m\in\{0,\ldots,k-1\}} \frac{1}{k-m}q_{\theta,m}, 1\right\},
	\]
	where $q_{\theta,m}$ is the sum of the lowest $|R|-m$ elements in the set $\{\phi_r(\theta,c_0)\}_{r\in R}$; for further details, see~\citep[Lemma 3]{arieli2019private}.
	This is enforced via Constraints~\eqref{eq:lp1_beta_ub}.
	Constraints~\eqref{eq:lp1_q_ub} and~\eqref{eq:lp1_phi_lb} ensure $q_{\theta,m}$'s consistency, and are derived from the dual of a simple LP of this kind: $\min_{y\in\mathbb{R}^n}x^\top y$ s.t. $\mathds{1}^\top y=w$ and $0\leq y\leq 1$ (where $x\in\mathbb{R}^n$ is the vector from which we want to extract the sum of the smallest $w$ entries).
	This concludes the proof.
\end{proof}

\section{A Condition for Efficient Private Signaling}

%
%

In the following, we provide a necessary and sufficient condition for the poly-time computation of persuasive private signaling schemes under a general class of sender's objective functions. 
In the next section, this result will be exploited when dealing with anonymous utility functions and plurality voting.
We allow for general sender's utility functions of type $f_\theta:\times_{r\in R} C\to \mathbb{R}$, which generalizes previous results by~\cite{dughmi2017algorithmic} where the receivers' action space has to be binary. 
Given a collection of set functions $\mathcal{F}$, $P(\mathcal{F})$ denotes the class of persuasion instances in which, for each $\theta\in\Theta$, $f_\theta\in\mathcal{F}$.
We can state the following (the proof can be found in the Supplementary Material).

\begin{restatable}{theorem}{general}\label{th:general}
%
	Let $\mathcal{F}$  be any collection of set functions including $f_0(\cdot)=0$.
	Given any instance in $P(\mathcal{F})$, there exists a polynomial-time algorithm for computing an optimal persuasive private signaling scheme if and only if there is a polynomial-time algorithm that computes 
	\begin{equation}\label{eq:general}
	\max_{\ct\in\times_{r\in R}C}f(\ct)+\sum_{r\in R}w_r(c^r),
	\end{equation}
	for any $f\in\mathcal{F}$, and any weights $w_r(c^r)\in\mathbb{R}$, where $c^r$ is the action chosen by $r$ in $\ct$.
\end{restatable}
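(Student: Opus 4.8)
The plan is to characterize the optimal private signaling scheme as the solution of a linear program and to establish both directions of the equivalence through LP duality together with the polynomial equivalence of separation and optimization (the ellipsoid method). Working with direct and persuasive schemes (so signals are action recommendations, $S_r=C$), I would write the primal LP with one variable $\phi(\theta,\ct)$ for every state $\theta$ and every recommendation profile $\ct\in\times_{r\in R}C$, objective $\sum_{\theta\in\Theta}\sum_{\ct}\mu(\theta)\phi(\theta,\ct)f_\theta(\ct)$, the normalization constraints $\sum_{\ct}\phi(\theta,\ct)=1$ for each $\theta$, and the persuasiveness constraints $\sum_{\theta}\mu(\theta)\sum_{\ct:\,c^r=c}\phi(\theta,\ct)\,(u_r(\theta,c)-u_r(\theta,c'))\ge 0$ for every receiver $r$ and every pair $c,c'\in C$. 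The crucial structural observation is that this LP has exponentially many variables but only polynomially many nontrivial constraints; I would therefore reason about its dual, which has polynomially many variables (a free $\lambda_\theta$ per state, minimized in the objective $\sum_\theta\lambda_\theta$, and a nonnegative $y_{r,c,c'}$ per incentive constraint) and one constraint per pair $(\theta,\ct)$.

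For the ``if'' direction, I would compute the dual constraint attached to $(\theta,\ct)$ and read off that it is violated exactly when $\lambda_\theta<\mu(\theta)\big(f_\theta(\ct)+\sum_{r}\sum_{c'}y_{r,c^r,c'}(u_r(\theta,c^r)-u_r(\theta,c'))\big)$. Hence, for each fixed $\theta$, the dual separation problem is to maximize the right-hand side over $\ct$; since $\mu\in\mathrm{int}(\Delta(\Theta))$ gives $\mu(\theta)>0$, dividing by $\mu(\theta)$ turns this into $\max_{\ct}f_\theta(\ct)+\sum_r w_r(c^r)$ with $f_\theta\in\mathcal{F}$ and weights $w_r(c)=\sum_{c'}y_{r,c,c'}(u_r(\theta,c)-u_r(\theta,c'))$ --- precisely the oracle in Equation~\eqref{eq:general}. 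A polynomial-time oracle for~\eqref{eq:general} thus separates the dual, so the ellipsoid method solves it in polynomial time; to obtain the scheme itself (not just its value) I would collect the polynomially many profiles $\ct$ returned by the oracle during the ellipsoid run, restrict the primal to those columns, and re-solve the resulting polynomial-size LP, which stays optimal because the omitted dual constraints remain satisfied.

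For the ``only if'' direction, I would reduce~\eqref{eq:general} to an instance of optimal private signaling lying in $P(\mathcal{F})$, the membership being guaranteed by the assumption $f_0\equiv 0\in\mathcal{F}$ (used to pad auxiliary states). The key enabling fact is that arbitrary real weights are realizable as incentive terms: for any target $w_r(\cdot)$ one can fix a reference action and choose per-state receiver utilities together with multipliers $y_{r,c,c'}\ge 0$ so that $\sum_{c'}y_{r,c,c'}(u_r(\theta,c)-u_r(\theta,c'))=w_r(c)$, after normalizing each $w_r$ to be nonnegative up to a tracked additive constant that merely shifts the optimum. I would then engineer the utilities and prior of the auxiliary states so that complementary slackness forces the optimal dual multipliers to take exactly these values, making the optimal sender value coincide (up to the known affine shift) with $\max_{\ct}f(\ct)+\sum_r w_r(c^r)$; the maximizing profile is then recovered from the support of the returned scheme.

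The main obstacle is this last reduction: making the optimal multipliers of the constructed persuasion instance equal the prescribed weights --- that is, forcing the right incentive constraints to bind with the right shadow prices while keeping the construction inside $P(\mathcal{F})$ --- is delicate, because in the signaling LP the multipliers are themselves optimized rather than chosen. The forward direction is comparatively routine once the dual is written, the only subtlety there being the standard ``ellipsoid-against-hope'' recovery of a polynomially supported optimal scheme from the value oracle.
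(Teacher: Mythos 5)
Your ``if'' direction (oracle for~\eqref{eq:general} $\Rightarrow$ tractable signaling) is essentially the paper's own argument: write the exponential-size primal, dualize, observe that for each fixed $\theta$ the dual separation problem is exactly Problem~\eqref{eq:general} with weights assembled from the incentive-constraint multipliers, and run the ellipsoid method; your column-generation step for recovering the scheme itself is a standard addition the paper leaves implicit. The genuine gap is in the ``only if'' direction. Your plan---construct a single persuasion instance and ``engineer the utilities and prior of the auxiliary states so that complementary slackness forces the optimal dual multipliers to take exactly these values''---is precisely the step that does not exist: in the signaling LP the multipliers are produced by the optimization, not chosen by the reduction designer, and you give no construction achieving this (you flag it yourself as an unresolved obstacle, which means the direction is not proved). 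The single-instance idea is also doubtful on its face: the optimum of a persuasion LP is attained at a possibly fractional vertex whose support need not contain the discrete maximizer of $f(\ct)+\sum_{r}w_r(c^r)$, so there is no reason the optimal value or the support of one constructed instance should reveal the oracle's answer.

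The paper's route for this direction avoids controlling dual multipliers altogether, and this is the idea you are missing. After normalizing the weights so that $\bar w_r\le 0$ with at least one zero entry per receiver, it observes that evaluating $\max_{\s}f(\s)+\sum_{r}\bar w_r(s_r)$ is a \emph{separation} problem for the feasible region of an auxiliary LP (LP~\ref{eq:subproblem}), and then invokes the Gr\"otschel--Lov\'asz--Schrijver equivalence of separation and optimization to reduce it to \emph{optimizing} over that region for arbitrary objective coefficients $(\alpha,\beta)$. The degenerate cases $\alpha\le 0$ are handled by inspection; for $\alpha>0$ one rescales to $\alpha=1$ and dualizes, obtaining an LP over distributions $p\in\Delta(S)$ with constraints $\sum_{\s:s_r=c}p(\s)\le\beta_r(c)$. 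The key construction then realizes \emph{this} LP as a persuasion instance in $P(\mathcal{F})$: one state $\theta_0$ with sender utility $f$, plus $|C|-1$ dummy states with sender utility $f_0=0$ (this is where the hypothesis $f_0\in\mathcal{F}$ is used), uniform prior, and receiver utilities in which the coefficients $\beta_r(c)$ enter as \emph{payoffs} in the dummy states; the incentive constraints of that instance coincide exactly with the capacity constraints above, so the signaling algorithm solves the dual LP directly. In other words, the prescribed quantities appear as instance data, never as shadow prices, and the oracle is answered through polynomially many calls to the signaling algorithm (one per ellipsoid/binary-search step), not through a single engineered instance. Without this chain---oracle as separation, GLS equivalence, dualization, persuasion instance encoding the dual---your backward direction does not go through.
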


The crucial difference with the result by~\cite{dughmi2017algorithmic} is that they consider set functions 
depending only on the set of players choosing the target action, among the two available.
Theorem~\ref{th:general} generalizes this setting as it allows for functions taking as input any action profile $\ct$.
This is crucial in settings like plurality voting, where the sender is not only interested in votes favorable to $c_0$, but also in the distribution of the other preferences.
\cite{dughmi2017algorithmic}'s result cannot be applied to such settings.

\section{Further Positive Results for Private Signaling}

Despite Theorem~\ref{th:general}, in the case of general utility functions the problem of determining an optimal persuasive private signaling scheme is still largely intractable.
An intuition behind that is that there may be an exponential (in $|C|$) number of values of $f$ (\emph{e.g.}, in the case of anonymous utility functions, there are $\binom{|R|+|C|-1}{|R|}$ values of $f$).
In order to identify tractable classes of the problem, we need to make some further assumptions on $\mathcal{F}$.

\textbf{Anonymous Utility Functions}.
A reasonable (in the context of voting) restriction is to \emph{anonymous utility functions} (see, \emph{e.g.}, \cite{arieli2019private}).
Previous results on the computational complexity of private signaling with anonymous utility functions focus on the case of binary actions, which is shown to be tractable~\citep{babichenko2016computational,arieli2019private,dughmi2017algorithmic}.
We generalize these results to a generic number of states of nature and receiver's actions with the following result (the proof is provided in the Supplementary Material).
\begin{restatable}{theorem}{anon}\label{th:anonymous}
	Private Bayesian persuasion with anonymous sender's utility functions is fixed-parameter tractable in the number of receivers' actions.
\end{restatable}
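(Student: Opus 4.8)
The plan is to reduce the problem, via Theorem~\ref{th:general}, to the weighted maximization in Equation~\eqref{eq:general}, and to solve the latter efficiently by exploiting anonymity. Since the collection $\mathcal{F}$ of anonymous functions contains the zero function $f_0(\cdot)=0$, the hypothesis of Theorem~\ref{th:general} is met, and it suffices to give an algorithm that computes $\max_{\ct}\, f(\ct)+\sum_{r\in R}w_r(c^r)$ in time polynomial in $n$ and $|R|$ for every fixed number of actions $|C|$. Theorem~\ref{th:general}'s equivalence invokes this maximization as an optimization oracle with only polynomially many calls and polynomial overhead, so such a routine lifts to an algorithm for the full signaling problem that is fixed-parameter tractable in $|C|$.

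First I would use anonymity to replace the action profile by its count vector. Because $f$ depends only on how many receivers select each action, write $f(\ct)=\hat f(\mathbf{n}(\ct))$, where $\mathbf{n}(\ct)=(n_c)_{c\in C}$ records the number of receivers playing each $c$ and satisfies $\sum_{c\in C}n_c=|R|$. The set of admissible count vectors has size $\binom{|R|+|C|-1}{|C|-1}=O(|R|^{|C|-1})$, which is polynomial in $|R|$ once $|C|$ is treated as a parameter; I would enumerate it explicitly.

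For each fixed count vector $\mathbf{n}$ the value $\hat f(\mathbf{n})$ is a constant, so maximizing Equation~\eqref{eq:general} over profiles consistent with $\mathbf{n}$ amounts to maximizing $\sum_{r\in R}w_r(c^r)$ subject to exactly $n_c$ receivers being assigned to each action $c$. I would cast this as a balanced max-weight transportation problem: the receivers on one side, $n_c$ slots of action $c$ on the other, and weight $w_r(c)$ on the edge assigning receiver $r$ to action $c$. Since $\sum_{c\in C} n_c=|R|$, a maximum-weight perfect assignment exists and is computable in polynomial time (e.g.\ by the Hungarian method or min-cost flow), yielding the best value $A(\mathbf{n})$ attainable under $\mathbf{n}$. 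Returning $\max_{\mathbf{n}}\,\hat f(\mathbf{n})+A(\mathbf{n})$ solves Equation~\eqref{eq:general} exactly, in time $\binom{|R|+|C|-1}{|C|-1}\cdot\mathrm{poly}(|R|)$.

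The combinatorial core above is routine; I expect the main obstacle to lie in the interface with Theorem~\ref{th:general}. Concretely, one must verify that the poly-time equivalence of that theorem is genuinely oracle-based, so that replacing its exact poly-time subroutine by one that is only polynomial for each fixed $|C|$ preserves the overall running time, and hence yields fixed-parameter tractability in the number of receivers' actions as claimed in Theorem~\ref{th:anonymous}. A secondary point to check is that arbitrary real (possibly negative) weights $w_r(\cdot)$ pose no difficulty for the transportation subproblem, which indeed they do not.
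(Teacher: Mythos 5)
Your proposal is correct and follows essentially the same route as the paper's own proof: invoke Theorem~\ref{th:general}, use anonymity to enumerate the $\binom{|R|+|C|-1}{|R|}$ count vectors (polynomially many for fixed $|C|$), and solve the per-count assignment of receivers to actions as an integral min-cost/max-cost flow LP. Your added remark that the equivalence in Theorem~\ref{th:general} must be oracle-based is a fair point of care, and it holds here since that equivalence runs through the ellipsoid method with the maximization serving as the separation oracle, so an FPT oracle yields an FPT overall algorithm.
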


Theorem~\ref{th:anonymous} implies that, for any anonymous voting rule, private Bayesian persuasion is fixed-parameter tractable in the number of candidates.

\textbf{Plurality Voting}.
By further restricting our attention to specific voting rules, we can see the consequences of Theorem~\ref{th:general} to an even better extent.
A simple and widespread voting rule is \emph{plurality voting}.~\footnote{See, \emph{e.g.}, its (\emph{discussed}) adoption in direct presidential elections in a number of states~\citep{blais1997direct}.}
In this setting $W(\s)=1$ if and only if $\delta(\s,c_0)>\delta(\s,c)$ for any $c\neq c_0$, and $W(\s)=0$ otherwise.
We can state the following:
\begin{theorem}\label{th:private_pl}
\textsf{PL-V} with private signaling can be solved in $\textnormal{poly}(n,\ell,|R|)$ time.	
\end{theorem}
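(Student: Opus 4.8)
The plan is to invoke Theorem~\ref{th:general}. In plurality voting the sender's utility $W$ depends only on the number of votes received by each candidate and is state-independent, so I set $\mathcal{F}=\{f_0,W\}$ with $f_0\equiv 0$; the plurality instance then lies in $P(\mathcal{F})$. By Theorem~\ref{th:general}, it suffices to exhibit a $\textnormal{poly}(n,\ell,|R|)$ algorithm computing
\[
\max_{\ct\in\times_{r\in R}C}\ f(\ct)+\sum_{r\in R}w_r(c^r)
\]
for every $f\in\mathcal{F}$ and arbitrary weights $w_r(c^r)\in\mathbb{R}$. For $f=f_0$ this is immediate: the objective decouples across receivers, so each $r$ independently picks an action in $\argmax_{c\in C}w_r(c)$, yielding the unconstrained optimum $M_1=\sum_{r\in R}\max_{c\in C}w_r(c)$.

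The substantive case is $f=W$. Since $W\in\{0,1\}$, I would split the maximization according to whether $c_0$ wins. Writing $M_2=\max_{\ct:\,c_0\text{ wins}}\sum_{r\in R}w_r(c^r)$ for the best weight attainable on a winning profile, the desired value equals $\max\{M_1,\,1+M_2\}$: every losing profile contributes at most $M_1$ with $W=0$, while the best winning profile contributes exactly $1+M_2$. Correctness of this identity follows because the unconstrained optimizer $M_1$ is attained either at a losing profile (so the loser-restricted max equals $M_1$) or at a winning one (so $M_1=M_2\le 1+M_2$); in both cases the formula returns the true optimum. Thus the problem reduces to computing $M_2$.

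The key step is computing $M_2$, which I would do by enumeration over the margin. Let $t\in\{1,\dots,|R|\}$ be the number of voters assigned to $c_0$. Candidate $c_0$ strictly wins the plurality if and only if every other candidate receives at most $t-1$ votes, so $M_2=\max_t M_2(t)$, where $M_2(t)$ maximizes $\sum_{r\in R}w_r(c^r)$ over profiles placing exactly $t$ receivers on $c_0$ and at most $t-1$ receivers on each $c\neq c_0$. Each $M_2(t)$ is a capacity-constrained bipartite assignment between receivers and candidates with edge weights $w_r(c)$, where every receiver has out-degree one, $c_0$ has degree exactly $t$, and each remaining candidate has capacity $t-1$. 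Its constraint matrix is that of a transportation problem and hence totally unimodular, so the linear relaxation has an integral optimum and $M_2(t)$ is computable in polynomial time (e.g.\ via min-cost flow). Discarding the infeasible values of $t$ (those with $\ell(t-1)<|R|-t$) and taking the best of the at most $|R|$ remaining ones gives $M_2$, and therefore the full oracle, in $\textnormal{poly}(n,\ell,|R|)$ time.

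I expect the main obstacle to be precisely the exact (not merely approximate) solution of $M_2(t)$: one must verify that the plurality constraint ``$c_0$ strictly outpolls every other candidate'' can be encoded as a capacitated assignment whose polytope is integral, so that enumerating the winning margin reduces the combinatorial constraint to a sequence of tractable flow problems. The remaining ingredients---the reduction through Theorem~\ref{th:general}, the trivial $f_0$ oracle, and the $\max\{M_1,\,1+M_2\}$ decomposition---are routine.
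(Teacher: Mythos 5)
Your proposal is correct and follows essentially the same route as the paper's proof: both invoke Theorem~\ref{th:general}, split the oracle's maximization into non-winning profiles (handled by the decoupled per-receiver maximum $\sum_{r}\max_{c}w_r(c)$) and winning profiles (handled by enumerating the number of votes for $c_0$ and solving, for each count, an assignment LP with degree/capacity constraints whose integrality follows from its flow structure). The only differences are cosmetic---your clean identity $\max\{M_1,\,1+M_2\}$ replaces the paper's ``solve the relaxation, then adjust by checking whether its argmax is winning'' step, and you justify integrality via total unimodularity where the paper cites the maximum cost flow problem.
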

\begin{proof}
	We exploit Theorem~\ref{th:general}, and show that the maximization Problem~\eqref{eq:general} can be solved efficiently.
	With an overload of notation, generic actions profiles are represented via signals.
	Then, the maximization problem reads: $\max_{\s \in S} W(\s) +  \sum_{r\in R}w_r(s_r)$.
	We split the maximization problem in two steps.
	First, we consider the maximization over non-winning action profiles, \emph{i.e.}, signals in the set $\bar S=\{\s \in S | \exists c \neq c_0 \textnormal{ s.t. } \delta(\s,c)>\delta(\s,c_0)\}$.
	An upper bound to the optimal value of the maximization problem restricted to $\bar S$ is given by $\max_{\s \in S}\sum_{r}w_r(s_r)$.
	The latter problem can be solved independently for each receiver $r$, by choosing $c$ maximizing $w_r(c)$.
	Once the relaxed problem has been solved, the objective function of the separation problem is adjusted by checking whether $\ct$ is winning or not.
	The resulting value is then compared with the value from the following step.
	
	We consider the maximization over winning action profiles, \emph{i.e.}, signals in $S^\ast=S \setminus \bar S$.
	For any $\s\in S^\ast$, $W(\s)=1$.
	Then, we have to maximize the same objective of the previous case with the following additional constraints: $\delta(\s,c_0)>\delta(\s,c)$, for all $c\neq c_0$.
	To determine an optimal solution to this problem, we enumerate over $k\in\{\lceil\frac{|R|-1}{\ell}\rceil+1,\ldots,|R|\}$, \emph{i.e.}, the number of votes that make $c_0$ a potential winner of the election.
	Then, for each value of $k$, we consider action profiles such that $\delta(\s,c_0)=k$, and $\delta(\s,c)<k$, for all $c\neq c_0$ (\emph{i.e.}, winning signals where $c_0$ receives exactly $k$ votes).
	An optimal solution for a fixed $k$ can be determined with this LP:
	\vspace{.1cm}
	\begin{subequations}
		\begin{align*}
		\max_{\chi\in\mathbb{R}_{+}^{|R\times C|} }& \sum_{(r,c)\in R\times C}\chi_r(c)w_r(c)\\
		\textnormal{s.t. } & \sum_{r\in R}\chi_r(c_0)=k\\
		& \sum_{r\in R} \chi_r(c)\leq k-1 \hspace{.5cm}\forall c\in C\setminus\{c_0\}\\
		& \sum_{c\in C}\chi_r(c)=1 \hspace{1.1cm}\forall r\in R.
		\end{align*}
	\end{subequations}
	We look for an integer solution of the problem, which always exists and can be found in polynomial time (see, \emph{e.g.},~\citep{orlin1997polynomial}).
	This is because the formulation is an instance of the \emph{maximum cost flow problem}, which is, in its turn, a variation of the \emph{minimum cost flow problem}.
	%
	%
	Once an integer solution has been found, an optimal action profile of the original maximization problem is the one obtained by recommending to each $r$ the candidate $c$ s.t. $\chi_r(c)=1$.
\end{proof}

\section{Public Signaling}\label{sec:public}

In contrast with the results for private signaling problems, we show that public persuasion in the context of voting is largely intractable.

We reduce from MAXIMUM $k$-SUBSET INTERSECTION (MSI)~\citep{clifford2011maximum}.
\begin{definition}[MSI]
	An instance of MAXIMUM $k$-SUBSET INTERSECTION is a tuple $(\mathcal{E}, A_1,\ldots,A_m,k,q)$, where $\mathcal{E}=\{e_1,\ldots,e_n\}$ is a finite set of elements, each $A_i$, $i\in [m]$, is a subset of $\mathcal{E}$, and $k$, $q$ are positive integers. 
	It is a  ``\emph{yes}''-instance if there exist exactly $k$ sets $A_{i_1},\ldots,A_{i_k}$ such that $|\cap_{j\in [k]} A_{i_j}|\geq q$, and a ``\emph{no}''-instance otherwise.
\end{definition}
MSI has been recently shown to be \textsf{NP}-hard~\citep{xavier2012note,elkind2015equilibria}.
Now, we prove the following negative result:
\begin{theorem}\label{th:negative}
	\textsf{K-V} with public signaling, even with two candidates, cannot be approximated in polynomial time to within \emph{any} factor, unless \textsf{P=NP}.
\end{theorem}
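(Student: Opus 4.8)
The plan is to prove inapproximability via a gap-preserving reduction from MSI. Given an MSI instance $(\mathcal{E}, A_1,\ldots,A_m,k,q)$, I would construct a public-signaling $k$-voting instance whose optimal sender value is strictly positive (in fact, bounded below by some explicit constant) when the MSI instance is a ``yes''-instance, and exactly zero when it is a ``no''-instance. Since any multiplicative approximation to an optimum of $0$ is still $0$, distinguishing positive from zero optimal value cannot be done in polynomial time unless \textsf{P=NP}; this yields the ``cannot be approximated within any factor'' conclusion. The key design choice is to make each subset $A_i$ correspond to a state of nature (or to a signal), and to make each element $e_j\in\mathcal{E}$ correspond to a voter, wiring up the utilities so that under a public signal a voter is persuaded to vote $c_0$ exactly when the element it represents lies in the intersection of the sets selected by the signaling scheme.

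First I would set up the correspondence. With two candidates $C=\{c_0,c_1\}$, a public signaling scheme partitions (or maps) the states $\Theta$ into signals, and each voter, upon seeing a public signal $s$, computes a posterior over $\Theta$ and votes $c_0$ iff the posterior expected utility of $c_0$ weakly exceeds that of $c_1$. I would let states encode the sets $A_i$, and choose voter utilities $u_r(c_0,\theta)$ and $u_r(c_1,\theta)$ so that voter $r$ (representing element $e_r$) is willing to vote for $c_0$ under a pooled signal only when all the pooled states correspond to sets that \emph{contain} $e_r$ --- i.e.\ the voter is persuaded precisely on the intersection. The parameter $k$ of the $k$-voting rule is tuned so that $c_0$ wins under signal $s$ iff at least $q$ voters are persuaded, i.e.\ iff the intersection of the selected sets has size at least $q$. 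A careful use of the prior $\mu$ and the requirement $\mu\in\textnormal{int}(\Delta(\Theta))$ (so every state has positive probability) lets me force that a signal pooling a collection of states persuades voter $r$ exactly when that collection's sets all contain $e_r$.

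The core of the argument is the equivalence: there is a public signaling scheme giving the sender positive expected utility iff some signal both (a) pools states corresponding to exactly $k$ of the sets and (b) causes at least $q$ voters to vote $c_0$, which translates to the $k$ selected sets having intersection of size $\geq q$. I would argue the forward direction by taking a witnessing signal of positive value and reading off the $k$ sets whose intersection is large; and the reverse direction by exhibiting the scheme that sends one dedicated ``winning'' signal on exactly the $k$ states realizing the large intersection and innocuous signals elsewhere, verifying persuasiveness and that $c_0$ gathers $\geq k$ votes under that signal. Throughout I must check that the constructed instance is persuasive-feasible and that ``no''-instances genuinely cap the optimum at $0$: if no $k$ sets intersect in $\geq q$ elements, then under \emph{every} public signal fewer than $k$ voters are persuaded, so $W=0$ for every persuasive scheme.

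The main obstacle I anticipate is enforcing, through the posterior-update mechanics of public signaling, that a voter is persuaded on a pooled signal \emph{exactly} on set-intersection membership --- and, critically, that exactly $k$ sets get selected. The subtlety is that a cheating signaling scheme could pool an arbitrary number of states (not just $k$), potentially persuading voters via averaging in ways that do not correspond to a clean $k$-set intersection; I would need the utility gadget and the choice of $k$ (and possibly auxiliary ``dummy'' states or voters) to rule out such spurious signals, ensuring that only pooling of states along a genuine $k$-subset with a size-$\geq q$ intersection can produce a winning signal. Getting this gadget tight --- so that the gap is between $0$ and strictly positive with no intermediate leakage --- is where the real work lies, and it is precisely what upgrades the earlier constant-factor hardness of \cite{dughmi2017algorithmic} to total inapproximability.
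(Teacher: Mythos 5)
You correctly reuse the paper's overall hardness logic---build a reduction from MSI in which ``yes''-instances give the sender strictly positive value and ``no''-instances give exactly zero, so that any multiplicative approximation guarantee would decide MSI---but your reduction, as oriented, fails on the ``no'' side, and the failure is exactly the obstacle you name at the end and leave unresolved. In your mapping (sets $A_i$ become states, elements become voters), a winning signal is supposed to certify a $k$-subset of sets with large intersection, yet nothing in a public-persuasion instance can force the sender to pool at least $k$ states: persuasiveness is a family of linear inequalities on the mass $\phi(\cdot,\gamma)$ a signal places on states, and the voting rule only counts votes, so neither can lower-bound the \emph{support size} of a signal. Concretely, consider a ``no''-instance of MSI in which some single set has $|A_i|\geq q$; such instances are unavoidable, since any instance in which every set has size $<q$ is trivially a ``no''-instance, so the hard instances are precisely those containing sets of size $\geq q$. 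In your construction the fully revealing scheme is persuasive, and upon the signal revealing $\theta_i$ every voter representing an element of $A_i$ has posterior point mass on a state where $c_0$ pays strictly more than $c_1$, so at least $q\leq |A_i|$ voters elect $c_0$ and the sender wins with probability at least $1/m>0$. Hence ``no''-instances do not cap the optimum at zero and the gap argument never gets off the ground. The ``dummy gadget'' you defer is not a detail; it is the entire content of the theorem.

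The paper resolves this by transposing your correspondence so that \emph{both} cardinality parameters of MSI become vote counts, which the voting threshold can enforce. States are elements ($\theta_e$ for each $e\in\mathcal{E}$, uniform prior); there is one voter $r_i$ per set $A_i$ (utility $1$ for $c_0$ on states in $A_i$, $-n^2$ outside, $0$ for $c_1$) and, crucially, $m$ identical voters $r_{e,j}$ per element $e$ (utility $1$ on $\theta_e$, $-\frac{1}{q-1}$ elsewhere), under a $(k+mq)$-voting rule. The element-voters are the gadget you are missing: $r_{e,j}$ is persuaded by $\gamma_0$ only if $\phi(\theta_e,\gamma_0)\geq\frac{1}{q}\sum_{e'}\phi(\theta_{e'},\gamma_0)$, so at most $q$ elements can have their block of $m$ voters persuaded; since the $m$ set-voters alone cannot reach the threshold, a winning signal must persuade all $m$ copies for at least, hence exactly, $q$ elements, forcing its mass to be spread uniformly ($1/q$ each) over exactly $q$ states and to vanish elsewhere. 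At least $k$ set-voters must then also be persuaded, and the $-n^2$ penalty permits this only if each such $A_i$ contains all $q$ pooled elements (otherwise $\xi(A_i)\leq\frac{q-1}{q}\sum_{e}\phi(\theta_e,\gamma_0)$ contradicts the persuasiveness requirement $\xi(A_i)\geq\frac{n^2}{n^2+1}\sum_{e}\phi(\theta_e,\gamma_0)$, as $q\leq n$). In this way ``at least $k$ sets'' is enforced as $k$ votes rather than as a support-size condition---precisely the step your proposal would need to reconstruct.
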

\begin{proof}
	Given an instance of MSI, we build a public signaling problem with the following features.

	\textbf{Mapping}. It has a voter $r_i$ for each $A_i$, $i\in[m]$, and $m$ voters $r_{e,j}$, $j\in [m]$, for each $e\in\mathcal{E}$.
	There is a state of nature $\theta_e$ for each $e\in\mathcal{E}$, and $\mu(\theta_e)=1/n$ for each $\theta_e$.
	Finally, $C=\{c_0,c_1\}$.
	Receivers have the following utility functions: for each $r_i$, $i\in[m]$,
	\begin{equation*}
	u_{r_i}(\theta_e,c) = \begin{cases}
	\begin{array}{ll}
	1 &\textnormal{if}\quad e\in A_i \textnormal{, } c=c_0\\
	-n^2 & \textnormal{if}\quad e\notin A_i \textnormal{, } c=c_0\\
	0 & \textnormal{if} \quad c=c_1\\
	\end{array}
	\end{cases},
	\end{equation*}
	for each $r_{e,j}$, $e\in\mathcal{E}$, and $j\in[m]$, 
	\begin{equation*}
	u_{r_{e,j}}(\theta_{e'},c) = \begin{cases}
	\begin{array}{ll}
	1 &\textnormal{if}\quad e=e'\textnormal{, }c=c_0\\
	-\frac{1}{q-1} & \textnormal{if}\quad e\neq e'\textnormal{, }c=c_0\\
	0 & \textnormal{if} \quad c=c_1\\
	\end{array}
	\end{cases}.
	\end{equation*}
	The sender needs at least $k+mq$ votes (for $c_0$) in order to win the election (i.e., we are considering a $(k+mq)$-voting rule).
	We prove our theorem by showing that $c_0$ has a strictly positive probability of winning the election if and only if the corresponding MSI instance is satisfiable.
	
	\textbf{If}.
	Suppose there exists a set $A^\ast=\{A_{i_1},\ldots,A_{i_k}\}$ satisfying the MSI instance, and let $I=\cap_{j\in[k]}A_{i_j}$.
	Define a signaling scheme $\phi$ with two signals ($\gamma_0$ and $\gamma_1$) such that, for each $e\in I$, $\phi(\theta_e,\gamma_0)=1$, and, for each $e\notin I$, $\phi(\theta_e,\gamma_1)=1$, and it is equal to 0 otherwise.
	We show that such a signaling scheme guarantees a strictly positive winning probability for the sender.
	First, we show that, when the realized state of nature $\theta_e$ is such that $e\in I$ (\emph{i.e.}, the sender recommends $\gamma_0$), at least $k+mq$ receivers vote for $c_0$. 
	Each receiver $r_i$ such that $A_i\in A^\ast$ will choose $c_0$ when recommended $\gamma_0$.
	Specifically, $\sum_{\theta_e}\frac{1}{n}\phi(\theta_e,\gamma_0)u_{r_i}(\theta_e,c_0)=\frac{q}{n}$, while $\sum_{\theta_e}\frac{1}{n}\phi(\theta_e,\gamma_0)u_{r_i}(\theta_e,c_1)=0$.
	Receivers $r_{e,j}$ with $e\in I$ will vote for $c_0$ after observing $\gamma_0$.
	This is because, for each $e\in I$ and $j\in[m]$, $r_{e,j}$ has expected utility $\frac{1}{n}\phi(\theta_e,\gamma_0)-\sum_{\theta_e':e'\neq e}\frac{1}{n}\frac{1}{q-1}\phi(\theta_{e'},c_0)=0$ for voting $c_0$, and expected utility 0 for voting $c_1$.
	Then, when the realized state of nature is $\theta_e$ with $e\in I$, there are at least $k+mq$ receivers voting for $c_0$.
	Therefore, the sender's winning probability is at least $\frac{k}{n}$ (\emph{i.e.}, the probability of observing $\theta_e$ with $e\in I$ under a uniform prior).
	
	\textbf{Only if}.
	Suppose, by contradiction, that MSI is not satisfiable, and that the sender's winning probability under the optimal signaling scheme is not null. 
	This implies the existence of a signal $\gamma_0$ such that, when recommended, a set of receivers $R^\ast$ votes for $c_0$, and $|R^\ast|\geq k+mq$.
	Then, there exist at least $q$ states $\theta_e$ in which all voters $r_{e,j}$, $j\in[m]$, vote for $c_0$.
	Each receiver $r_{e,j}$, having observed $\gamma_0$, votes for $c_0$ only if 
	$ \phi(\theta_e,\gamma_0)-\frac{1}{q-1}\sum_{\theta_{e'}:e'\neq e}\phi(\theta_{e'},\gamma_0)\geq 0$.
	This implies that  $\phi(\theta_e,\gamma_0)-  \sum_{\theta_{e'}} \phi(\theta_{e'},\gamma_0) + \phi(\theta_e,\gamma_0) \geq 0$ and $\phi(\theta_e,\gamma_0) \ge \sum_{\theta_{e'}\in \Theta}\phi(\theta_{e'},\gamma_0)/q$.
	Then, there are exactly $q$ states $\theta_e$ in which $\gamma_0$ is played with probability $\sum_{\theta_{e'}\in \Theta}\phi(\theta_{e'},\gamma_0)/q$, while $\gamma_0$ is never played in the remaining states.
	As a consequence, $R^*$ includes exactly $mq$ voters $r_{e,j}$, and at least $k$ voters $r_i$. 

	Each voter $r_i\in R^\ast$, after observing $\gamma_0$, choose candidate $c_0$.
	Therefore, $\sum_{\theta_e\in\Theta}\mu(\theta_e)\phi(\theta_e,\gamma_0)(u_{r_i}(\theta_e,c_0)-u_{r_i}(\theta_e,c_1))\geq 0$.
	We obtain $\sum_{e\in A_i}\phi(\theta_e,\gamma_0)-n^2\sum_{e\notin A_i}\phi(\theta_e,\gamma_0)\geq 0$. 
	Then,
	\[
	\sum_{e\in A_i}\phi(\theta_e,\gamma_0)-n^2\sum_{e\in \mathcal{E}}\phi(\theta_e,\gamma_0)+n^2\sum_{e\in A_i}\phi(\theta_e,\gamma_0)\geq 0.
	\]
	Let $\xi(A_i)=\sum_{e\in A_i}\phi(\theta_e,\gamma_0)$. 
	We have $\xi(A_i)\geq \frac{n^2}{n^2+1}\sum_{e\in\mathcal{E}}\phi(\theta_e,\gamma_0)$ for each $i\in[m]$ such that $r_i\in S^\ast$.

	
	
	Let $\mathcal{E}^\ast$ be the set of elements $e$ such that $r_{e,j}\in S^\ast$, for all $j\in[m]$.
	%
	%
	In this case, since MSI is not satisfiable, there exists a pair $(r_{\hat i},e)\in R\times\mathcal{E}^\ast$ such that $r_{\hat i}\in R^\ast$ and $e\notin A_{\hat i}$ (otherwise $\{A_i\}_{i:r_i\in R^\ast}$ would be a feasible solution with intersection $\mathcal{E}^\ast$).
	We observed that, in each $\theta_e$ with $e\in\mathcal{E}^\ast$, $\gamma_0$ is recommended with probability $\sum_{e\in\mathcal{E}}\phi(\theta_e,\gamma_0)/q$.
	Then, $\xi(A_{\hat i})=\sum_{e\in A_{\hat i}}\phi(\theta_e,\gamma_0)\leq \frac{q-1}{q}\sum_{e\in\mathcal{E}}\phi(\theta_e,\gamma_0)$.
	This leads to a contradiction since \[\frac{q-1}{q}\sum_{e\in\mathcal{E}}\phi(\theta_e,\gamma_0)\geq \frac{n^2}{n^2+1}\sum_{e\in\mathcal{E}}\phi(\theta_e,\gamma_0)\] has no solutions (since $q$ and $n$ are positive integers and $q\leq n$).
	This concludes our proof.
\end{proof}

Theorem~\ref{th:negative} improves the negative results provided by~\cite{dughmi2017algorithmic} (Theorem 6.2), where they show that optimal sender's utility cannot be approximated to within any constant multiplicative factor, unless $\mathsf{P=NP}$.
Our result strengthen the negative result by~\cite{dughmi2017algorithmic} by extending the inapproximability to \emph{any} factor that is function of the input size, thus even excluding approximation factors decreasing as the input size increases.

Moreover, Theorem~\ref{th:negative} implies that the public signaling problem is intractable even with more general sender's utility functions.
It is immediate to see that the same negative result holds for anonymous utility functions (a $k$-voting rule induces a sender's anonymous utility function), and we prove that the same hardness result holds also for plurality voting (see the Supplementary Material).
\begin{restatable}{corollary}{corollaryNeg}
	\textsf{PL-V} with public signaling, even with two candidates, cannot be approximated in polynomial time to within \emph{any} factor, unless \textsf{P=NP}.
\end{restatable}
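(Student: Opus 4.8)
The plan is to reduce directly from the \textsf{K-V} instance built in the proof of Theorem~\ref{th:negative}, converting its $(k+mq)$-voting rule into an equivalent plurality rule over the same two candidates $\{c_0,c_1\}$ by padding the electorate with \emph{dummy} voters whose preferences are state-independent. The key observation is that, with only two candidates, a threshold (i.e., $k$-voting) rule and plurality differ only by a shift of the winning threshold: $c_0$ wins under plurality exactly when it collects strictly more than half of all votes. So if I can engineer the total number of voters so that ``strict majority'' coincides with ``at least $k+mq$ real votes for $c_0$,'' the two instances become equivalent and the hardness transfers.

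Concretely, let $V=m(n+1)$ be the number of voters in the original reduction. I introduce dummy voters of two types: a \emph{$c_0$-dummy} with $u(\theta_e,c_0)=1$ and $u(\theta_e,c_1)=0$ for every $e$, and a \emph{$c_1$-dummy} with the reversed payoffs. Since these preferences are strict and independent of $\theta$, every such voter casts the same vote after \emph{any} public signal, contributing a fixed, signal-independent count to the tally. If $x$ denotes the number of \emph{real} voters choosing $c_0$, then adding $d$ $c_1$-dummies makes $c_0$ win under plurality iff $x>(V+d)/2$; choosing $d=2(k+mq)-2-V$ (when this quantity is nonnegative) makes this condition equivalent to $x\geq k+mq$. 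When $2(k+mq)-2-V<0$ I instead add $d=V-2(k+mq)+2$ $c_0$-dummies, yielding the symmetric condition $x>(V-d)/2 \iff x\geq k+mq$. In either regime $d=O(mn)$, so the reduction is polynomial.

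It then remains to verify that the equivalence with MSI is preserved. Because the model has no inter-agent externalities, each real voter's incentive (persuasiveness) constraint depends only on her own utility and the induced posterior, and is therefore completely unaffected by the dummies; hence the analysis of the real voters is verbatim that of Theorem~\ref{th:negative}. By construction, $c_0$ wins the plurality election precisely when at least $k+mq$ real voters choose $c_0$, which is exactly the winning condition in the original $(k+mq)$-voting instance. Consequently the sender attains a strictly positive winning probability under plurality if and only if the MSI instance is satisfiable, and the same gap argument yields inapproximability to within any factor unless \textsf{P=NP}.

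I expect the only delicate point to be the threshold bookkeeping---getting the parity and rounding of $d$ exactly right so that ``strict majority'' and ``$\geq k+mq$ real $c_0$-votes'' coincide rather than being off by one---together with checking that $d$ remains nonnegative in both regimes; the persuasiveness and hardness parts then transfer with no change.
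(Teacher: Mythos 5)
Your proposal is correct and is essentially the paper's own argument: the paper likewise observes that two-candidate plurality is a $k$-voting rule with the majority threshold and reduces \textsf{K-V} with arbitrary $k$ to it by padding the electorate with voters whose state-independent strict preferences (for $c_1$ when $k$ exceeds the majority threshold, for $c_0$ otherwise) make them vote the same way under any signal, then invokes Theorem~\ref{th:negative}. The only difference is immaterial bookkeeping---your padding yields an even total of $2(k+mq)-2$ voters where the paper's yields an odd total of $2(k+mq)-1$---and both choices make ``strict majority'' coincide with ``at least $k+mq$ real votes for $c_0$.''
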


\section{Discussion and Future Research}

This paper studies how a malicious actor may influence the outcome of a voting process by the strategic provision of information to voters that update their beliefs rationally.
We focus on the case with no inter-agent externalities, and allow for an arbitrary number of candindates and states of nature, thus generalizing the model of~\cite{arieli2019private}.
We draw a sharp contrast between the tractability of the problem of computing an optimal persuasive signaling scheme in the private and public case, respectively.
In the former setting, we show that under various voting rules (\emph{i.e.}, $k$-voting rules and plurality voting) the problem can be solved efficiently.
In doing so, we provide a generalization of the fundamental necessary and sufficient condition first described by~\cite{dughmi2017algorithmic}.
In the public signaling case, we propose a new inapproximability result which strongly improves previously known results, showing that in this setting the problem is unlikely to be tractable.

In the future, we are interested in identifying special classes of instances in which optimal persuasive public signaling schemes can be found efficiently and in using our results for the analysis of other voting rules.
Moreover, we are interested in the extremely challenging case in which multiple competing senders face the problem of manipulating the same election.

\clearpage

\begin{small}
\bibliographystyle{aaai}
\bibliography{biblio,dairefs}

\begin{thebibliography}{}

\bibitem[\protect\citeauthoryear{Allcott and
  Gentzkow}{2017}]{allcott2017social}
Allcott, H., and Gentzkow, M.
\newblock 2017.
\newblock Social media and fake news in the 2016 election.
\newblock {\em J ECON PERSPECT} 31(2):211--36.

\bibitem[\protect\citeauthoryear{Alonso and
  C{\^a}mara}{2016}]{alonso2016persuading}
Alonso, R., and C{\^a}mara, O.
\newblock 2016.
\newblock Persuading voters.
\newblock {\em AM ECON REV} 106(11):3590--3605.

\bibitem[\protect\citeauthoryear{Arieli and
  Babichenko}{2019}]{arieli2019private}
Arieli, I., and Babichenko, Y.
\newblock 2019.
\newblock Private bayesian persuasion.
\newblock {\em J ECON THEORY} 182:185--217.

\bibitem[\protect\citeauthoryear{Babichenko and
  Barman}{2016}]{babichenko2016computational}
Babichenko, Y., and Barman, S.
\newblock 2016.
\newblock Computational aspects of private bayesian persuasion.
\newblock {\em arXiv preprint arXiv:1603.01444}.

\bibitem[\protect\citeauthoryear{Bardhi and Guo}{2018}]{bardhi2018modes}
Bardhi, A., and Guo, Y.
\newblock 2018.
\newblock Modes of persuasion toward unanimous consent.
\newblock {\em THEOR ECON} 13(3):1111--1149.

\bibitem[\protect\citeauthoryear{Bergemann and
  Morris}{2016a}]{bergemann2016bayes}
Bergemann, D., and Morris, S.
\newblock 2016a.
\newblock Bayes correlated equilibrium and the comparison of information
  structures in games.
\newblock {\em THEOR ECON} 11(2):487--522.

\bibitem[\protect\citeauthoryear{Bergemann and
  Morris}{2016b}]{bergemann2016information}
Bergemann, D., and Morris, S.
\newblock 2016b.
\newblock Information design, {Bayesian} persuasion, and {Bayes} correlated
  equilibrium.
\newblock {\em AM ECON REV} 106(5):586--91.

\bibitem[\protect\citeauthoryear{Bergemann and
  Morris}{2019}]{bergemann2019information}
Bergemann, D., and Morris, S.
\newblock 2019.
\newblock Information design: A unified perspective.
\newblock {\em J ECON LIT} 57(1):44--95.

\bibitem[\protect\citeauthoryear{Bhaskar \bgroup et al\mbox.\egroup
  }{2016}]{bhaskar2016hardness}
Bhaskar, U.; Cheng, Y.; Ko, Y.~K.; and Swamy, C.
\newblock 2016.
\newblock Hardness results for signaling in bayesian zero-sum and network
  routing games.
\newblock In {\em ACM EC},  479--496.

\bibitem[\protect\citeauthoryear{Blais, Massicotte, and
  Dobrzynska}{1997}]{blais1997direct}
Blais, A.; Massicotte, L.; and Dobrzynska, A.
\newblock 1997.
\newblock Direct presidential elections: a world summary.
\newblock {\em ELECT STUD} 16(4):441--455.

\bibitem[\protect\citeauthoryear{Brandt \bgroup et al\mbox.\egroup
  }{2016}]{brandt2016handbook}
Brandt, F.; Conitzer, V.; Endriss, U.; Lang, J.; and Procaccia, A.~D.
\newblock 2016.
\newblock {\em Handbook of computational social choice}.
\newblock Cambridge University Press.

\bibitem[\protect\citeauthoryear{Candogan}{2019}]{candogan2019persuasion}
Candogan, O.
\newblock 2019.
\newblock Persuasion in networks: Public signals and k-cores.
\newblock {\em Available at SSRN}.

\bibitem[\protect\citeauthoryear{Chan \bgroup et al\mbox.\egroup
  }{2019}]{chan2019pivotal}
Chan, J.; Gupta, S.; Li, F.; and Wang, Y.
\newblock 2019.
\newblock Pivotal persuasion.
\newblock {\em Journal of Economic Theory} 180:178--202.

\bibitem[\protect\citeauthoryear{Chen \bgroup et al\mbox.\egroup
  }{2017}]{chen2017elections}
Chen, J.; Faliszewski, P.; Niedermeier, R.; and Talmon, N.
\newblock 2017.
\newblock Elections with few voters: candidate control can be easy.
\newblock {\em J ARTIF INTELL RES} 60:937--1002.

\bibitem[\protect\citeauthoryear{Clifford and Popa}{2011}]{clifford2011maximum}
Clifford, R., and Popa, A.
\newblock 2011.
\newblock Maximum subset intersection.
\newblock {\em Information Processing Letters} 111(7):323--325.

\bibitem[\protect\citeauthoryear{Dughmi and Xu}{2016}]{dughmi2016algorithmic}
Dughmi, S., and Xu, H.
\newblock 2016.
\newblock Algorithmic bayesian persuasion.
\newblock In {\em ACM STOC},  412--425.

\bibitem[\protect\citeauthoryear{Dughmi and Xu}{2017}]{dughmi2017algorithmic}
Dughmi, S., and Xu, H.
\newblock 2017.
\newblock Algorithmic persuasion with no externalities.
\newblock In {\em ACM EC},  351--368.

\bibitem[\protect\citeauthoryear{Dughmi}{2017}]{dughmi2017survey}
Dughmi, S.
\newblock 2017.
\newblock Algorithmic information structure design: a survey.
\newblock {\em ACM SIGEC EX} 15(2):2--24.

\bibitem[\protect\citeauthoryear{Dughmi}{2018}]{dughmi2018hardness}
Dughmi, S.
\newblock 2018.
\newblock On the hardness of designing public signals.
\newblock {\em GAME ECON BEHAV}.

\bibitem[\protect\citeauthoryear{Elkind \bgroup et al\mbox.\egroup
  }{2015}]{elkind2015equilibria}
Elkind, E.; Markakis, E.; Obraztsova, S.; and Skowron, P.
\newblock 2015.
\newblock Equilibria of plurality voting: Lazy and truth-biased voters.
\newblock In {\em SAGT},  110--122.
\newblock Springer.

\bibitem[\protect\citeauthoryear{Erd{\'e}lyi, Reger, and
  Yang}{2017}]{erdelyi2017complexity}
Erd{\'e}lyi, G.; Reger, C.; and Yang, Y.
\newblock 2017.
\newblock The complexity of bribery and control in group identification.
\newblock In {\em AAMAS},  1142--1150.

\bibitem[\protect\citeauthoryear{Faliszewski \bgroup et al\mbox.\egroup
  }{2009}]{faliszewski2009llull}
Faliszewski, P.; Hemaspaandra, E.; Hemaspaandra, L.~A.; and Rothe, J.
\newblock 2009.
\newblock Llull and copeland voting computationally resist bribery and
  constructive control.
\newblock {\em J ARTIF INTELL RES} 35:275--341.

\bibitem[\protect\citeauthoryear{Faliszewski \bgroup et al\mbox.\egroup
  }{2018}]{faliszewski2018opinion}
Faliszewski, P.; Gonen, R.; Kouteck{\`y}, M.; and Talmon, N.
\newblock 2018.
\newblock Opinion diffusion and campaigning on society graphs.
\newblock In {\em IJCAI},  219--225.

\bibitem[\protect\citeauthoryear{Faliszewski, Hemaspaandra, and
  Hemaspaandra}{2011}]{faliszewski2011multimode}
Faliszewski, P.; Hemaspaandra, E.; and Hemaspaandra, L.~A.
\newblock 2011.
\newblock Multimode control attacks on elections.
\newblock {\em J ARTIF INTELL RES} 40:305--351.

\bibitem[\protect\citeauthoryear{Gr{\"o}tschel, Lov{\'a}sz, and
  Schrijver}{1981}]{grotschel1981ellipsoid}
Gr{\"o}tschel, M.; Lov{\'a}sz, L.; and Schrijver, A.
\newblock 1981.
\newblock The ellipsoid method and its consequences in combinatorial
  optimization.
\newblock {\em Combinatorica} 1(2):169--197.

\bibitem[\protect\citeauthoryear{Kamenica and
  Gentzkow}{2011}]{kamenica2011bayesian}
Kamenica, E., and Gentzkow, M.
\newblock 2011.
\newblock Bayesian persuasion.
\newblock {\em AM ECON REV} 101(6):2590--2615.

\bibitem[\protect\citeauthoryear{Kamenica}{2018}]{kamenica2018bayesian}
Kamenica, E.
\newblock 2018.
\newblock Bayesian persuasion and information design.
\newblock {\em ANNU REV ECON} 11.

\bibitem[\protect\citeauthoryear{Khachiyan}{1980}]{khachiyan1980}
Khachiyan, L.~G.
\newblock 1980.
\newblock Polynomial algorithms in linear programming.
\newblock {\em USSR COMP MATH} 20(1):53--72.

\bibitem[\protect\citeauthoryear{Liu \bgroup et al\mbox.\egroup
  }{2009}]{liu2009parameterized}
Liu, H.; Feng, H.; Zhu, D.; and Luan, J.
\newblock 2009.
\newblock Parameterized computational complexity of control problems in voting
  systems.
\newblock {\em THEOR COMPUT SCI} 410(27-29):2746--2753.

\bibitem[\protect\citeauthoryear{Loreggia \bgroup et al\mbox.\egroup
  }{2015}]{loreggia2015controlling}
Loreggia, A.; Narodytska, N.; Rossi, F.; Venable, K.~B.; and Walsh, T.
\newblock 2015.
\newblock Controlling elections by replacing candidates or votes.
\newblock In {\em AAMAS},  1737--1738.

\bibitem[\protect\citeauthoryear{Orlin}{1997}]{orlin1997polynomial}
Orlin, J.~B.
\newblock 1997.
\newblock A polynomial time primal network simplex algorithm for minimum cost
  flows.
\newblock {\em MATH PROGRAM} 78(2):109--129.

\bibitem[\protect\citeauthoryear{Rayo and Segal}{2010}]{rayo2010optimal}
Rayo, L., and Segal, I.
\newblock 2010.
\newblock Optimal information disclosure.
\newblock {\em J POLIT ECON} 118(5):949--987.

\bibitem[\protect\citeauthoryear{Rubinstein}{2015}]{rubinstein2015honest}
Rubinstein, A.
\newblock 2015.
\newblock Honest signaling in zero-sum games is hard, and lying is even harder.
\newblock {\em arXiv preprint arXiv:1510.04991}.

\bibitem[\protect\citeauthoryear{Schnakenberg}{2015}]{schnakenberg2015expert}
Schnakenberg, K.~E.
\newblock 2015.
\newblock Expert advice to a voting body.
\newblock {\em J ECON THEORY} 160:102--113.

\bibitem[\protect\citeauthoryear{Sina \bgroup et al\mbox.\egroup
  }{2015}]{sina2015adapting}
Sina, S.; Hazon, N.; Hassidim, A.; and Kraus, S.
\newblock 2015.
\newblock Adapting the social network to affect elections.
\newblock In {\em AAMAS},  705--713.

\bibitem[\protect\citeauthoryear{Stanley}{2011}]{Stanley2011}
Stanley, R.~P.
\newblock 2011.
\newblock {\em Enumerative Combinatorics: Volume 1}.
\newblock Cambridge University Press, 2nd edition.

\bibitem[\protect\citeauthoryear{Taneva}{2015}]{taneva2015information}
Taneva, I.~A.
\newblock 2015.
\newblock Information design.

\bibitem[\protect\citeauthoryear{Wang}{2013}]{wang2013bayesian}
Wang, Y.
\newblock 2013.
\newblock Bayesian persuasion with multiple receivers.
\newblock {\em Available at SSRN 2625399}.

\bibitem[\protect\citeauthoryear{Wilder and
  Vorobeychik}{2018}]{wilder2018controlling}
Wilder, B., and Vorobeychik, Y.
\newblock 2018.
\newblock Controlling elections through social influence.
\newblock In {\em AAMAS},  265--273.

\bibitem[\protect\citeauthoryear{Wilder and
  Vorobeychik}{2019}]{wilder2019defending}
Wilder, B., and Vorobeychik, Y.
\newblock 2019.
\newblock Defending elections against malicious spread of misinformation.
\newblock In {\em AAAI}, volume~33,  2213--2220.

\bibitem[\protect\citeauthoryear{Xavier}{2012}]{xavier2012note}
Xavier, E.~C.
\newblock 2012.
\newblock A note on a maximum k-subset intersection problem.
\newblock {\em Information Processing Letters} 112(12):471--472.

\bibitem[\protect\citeauthoryear{Xu}{2019}]{xu2019tractability}
Xu, H.
\newblock 2019.
\newblock On the tractability of public persuasion with no externalities.
\newblock {\em CoRR} abs/1906.07359.

\end{thebibliography}
\end{small}

\clearpage
\appendix
\section{Supplementary Material}
\bigskip

\section{Omitted Proofs}

\general*

\begin{proof}
	Given a set $\{f_\theta\}_{\theta\in\Theta}$, the persuasion problem can be formulated with the following LP: 
	\begin{subequations}\label{eq:lp_general}
		\begin{align}
		\max_{x\in[0,1]^{|\Theta\times S|}} & \sum_{\theta\in\Theta,\s\in S}x(\theta,\s) f_\theta(\s)\label{eq:lp_general_obj}\\
		&\sum_{\substack{\theta\in\Theta,\\\s:s_r=c}} x(\theta,\s)\left(u_r(\theta,c)-u_r(\theta,c')\right)\geq 0\label{eq:lp_general_ic}\\
		&\nonumber \hspace{3cm}\forall r\in R, \forall c,c'\in C\\
		& \sum_{\s\in S} x(\theta,\s) = \mu(\theta) \hspace{1.9cm} \forall \theta\in\Theta \label{eq:lp_general_phi}
		\end{align}
	\end{subequations}
	Note that constraints~\ref{eq:lp_general_ic} force the signaling scheme to be persuasive.
	Therefore, in objective~\ref{eq:lp_general_obj}, we can write $f_\theta(\s)$ in place of $f_\theta(\ct)$.
	
	$(\Longrightarrow)$.
	Let $y\in\mathbb{R}_{-}^{|R\times C\times C|}$ be the dual variables of primal constraints~\ref{eq:lp_general_ic} and $d\in\mathbb{R}^{|\Theta|}$ be the dual variables of constraints~\ref{eq:lp_general_phi}.
	The dual of LP~\ref{eq:lp_general} has a polynomial number of variables and an exponential number of constraints, one for  each pair $(\theta,\s)\in \Theta\times S$, of type:
	\begin{multline*}
	O(\theta,\s)=\left( - \sum_{\substack{r\in R,\\c\in C}}y_r(s_r,c)\left(u_r(\theta,s_r)- u_r(\theta,c)\right)\right) + \\ -d(\theta)  + f_\theta(\s)\leq 0.
	\end{multline*}
	We show that, given a vector of dual variables $\bar z=(\bar y,\bar d)$, the problem of either finding a hyperplane separating $\bar z$ from the set of feasible solutions to the dual or proving that no such hyperplane exists can be solved in polynomial time. 
	The \emph{separation problem} of finding an inequality of the dual which is maximally violated at $\bar z$ reads: $\max_{(\theta,\s)\in\Theta\times S} O(\theta,\s)$.
	A pair $(\theta,\s)$ yielding a violated inequality exists if and only if the separation problem admits an optimal solution of value $>0$.
	One such pair (if any) can by found in polynomial time by enumerating over states in $\Theta$.
	For each $\theta$, the problem reduces to $\max_{\s} \sum_{r\in R}v_r(\theta,s_r)+f_\theta(\s)$, where $v_r(\theta,s_r)=-\sum_{c\in C}\bar y_r(s_r,c) \left(u_r(\theta,s_r)-u_r(\theta,c)\right)$.
	It is enough to take $w_r(c)=v_r(\theta,c)$ to complete the \emph{if} part of the proof.	
	
	$(\Longleftarrow)$.
	Given a poly-time algorithm to determine an optimal signaling scheme for any instance of $P(\mathcal{F})$, we want to show that $\max_{\s\in S} f(s)-\sum_{r\in R} w_r(s_r)$ can be solved efficiently for any $\{w_r(c)\}_{r,c}$, and $f\in\mathcal{F}$.
	
	To reduce this problem to a signaling problem we employ a duality-based analysis introduced in~\cite{dughmi2017algorithmic}, and later improved by~\cite{xu2019tractability}.
	Our generalization to non-binary action spaces requires a more involved proof, as we will highlight in the following.
	Moreover, our proof completely diverges from ~\cite{dughmi2017algorithmic}’s and ~\cite{xu2019tractability}’s in the final construction of the mapping to a private signaling problem.

	%
	
	Given a set of weights $\{\bar w_r(c)\}_{r,c}$, and $f\in\mathcal{F}$, we are interested in the maximization of $\bar f(\s)=f(\s)+\sum_{r\in R}\bar w_r(s_r)$ over $S$.
	First, we slightly modify weights by setting, for each $r\in R$, $\bar w_r(c)\leftarrow \bar w_r(c) - \max_{c'}\bar w_{r}(c')$, for each $c\in C$.
	This modification preserves the set of optimal solutions of the maximization problem.
	After that, for each receiver $r$, it holds $\bar w_r\leq 0$, and there exists $\hat c^r\in C$ s.t. $\bar w_r(\hat c^r)=0$.
	Let, for each $r\in R$, $C_r=C\setminus \{\hat c^r\}$ ($\hat c^r$ can be selected arbitrarily from the actions s.t. $\bar w_r(c)=0$).
	We show that $\max_{\s\in S} \bar f(\s)$ can be reduced to solving the following LP, for all possible linear coefficients $\alpha$, $\{\beta_r(c)\}_{r\in R,c\in C_{r}}$:
	\begin{subequations}\label{eq:subproblem}
		\begin{align}
		\min_{\substack{z\in\mathbb{R}^{|R\times C_r|}\\ v\in\mathbb{R}}} & \sum_{r\in R, c\in C_{r}} \beta_r(c) z_r(c) + \alpha v\\
		\textnormal{s.t.} \hspace{.3cm}& \sum_{\substack{r\in R\\ s_r\neq \hat c^r}}z_r(s_r)+v\geq f(\s) \hspace{1cm}\forall \s\in S\\
		&z_r(c)\geq 0 \hspace{2cm}\forall r\in R,c\in C_{r}.\label{eq:subproblem_geq}
		\end{align}
	\end{subequations}
	To show this, we first argue that the maximization problem can be reduced to the separation problem for the feasible region of LP~\ref{eq:subproblem}.
	Take $z_r(c)=-\bar w_r(c)$ for all $r$ and $c\in C_r$.
	Constraints of family~\ref{eq:subproblem_geq} are satisfied by construction.
	Then, a pair $(\{\bar w_r(c)\}_{r,c}, v)$ is feasible if and only if $ v\geq \max_{\s} f(\s) + \sum_{r,s_r\neq\hat c^r} \bar w_r(s_r)$.
	As a result, the optimal value $v^\ast$ (which is the exact optimal objective of $\bar f(\s)$) can be determined via binary search in $O(B)$ steps, where $B$ is the bit complexity of the $f(\s)$'s and $w$'s.
	Then, by setting $\bar v=v^\ast-2^{-B}$, we obtain an infeasible pair $(\bar w,\bar v)$.
	If the separation oracle is given in input $(\bar w,\bar v)$, it returns a separating hyperplane corresponding to the optimal solution of the maximization problem.
	The equivalence between optimization and separation implies that the maximization problem reduces to solving LP~\ref{eq:subproblem} for any linear coefficients $\{\beta_r(c)\}_{r\in R,c\in C_r}$ and $\alpha$~\citep{khachiyan1980,grotschel1981ellipsoid}.
	
	A crucial difference between LP~\ref{eq:subproblem} and~\cite{xu2019tractability}'s analogous LP is that we modify the initial weights $\bar w$ to make them $\leq 0$ (simplifying the LP's structure), and, for each $r$, there is at least one $\bar w_r(c)$ equal to $0$.
	This reduces the number of variables in LP~\ref{eq:subproblem}, as variables $z_r(\hat c^r)$ are not included.
	This is fundamental for the last step of the proof.
	
	
	The next step is showing that LP~\ref{eq:subproblem} can be solved \emph{directly} for some parameters' values.
	Specifically:
	\begin{itemize}[topsep=0mm,itemsep=0mm]
		\item If $\alpha<0$ the solution is unbounded (\emph{i.e.}, the objective function tends to $-\infty$ as $v\to \infty$).
		\item If $\alpha=0$ and there exists $(\bar r,\bar c)$ s.t. $\beta_{\bar r}(\bar c)<0$, then a feasible solution is obtained by setting: $z_{\bar r}(\bar c)=v$, and  $z_r(c)=0$ for all $(r,c)\neq(\bar r,\bar c)$. 
		Again, for $v\to\infty$ the objective tends to $-\infty$.
		\item If $\alpha=0$ and $\beta_r(c)\ge 0$ for all $(r,c)$, then the objective is $\geq 0$ for any feasible solution. 
		By selecting a sufficiently large $v$ we obtain a feasible and optimal solution with objective value 0.
	\end{itemize}
	Therefore, when $\alpha\leq 0$ the problem can be solved in polynomial time.
	
	We focus on the case in which $\alpha>0$.
	Since $\alpha>0$, we can re-scale all coefficients of LP~\ref{eq:subproblem} by a factor $1/\alpha$ without affecting its optimal solutions, and obtain an equivalent LP with $\alpha=1$.
	The dual of LP~\ref{eq:subproblem} with $\alpha=1$ is:
	\begin{subequations}\label{eq:subproblem_dual}
		\begin{align}
		\max_{p\in\mathbb{R}_{+}^{|S|}} \hspace{.2cm}& \sum_{\s\in S} p(\s)f(\s)\\
		\textnormal{s.t.} & \sum_{\s: s_r=c} p(\s) \leq \beta_r(c) \hspace{.5cm}\forall r\in R, c\in C_{r}\label{eq:constr_dual_modificato}\\
		& \sum_{\s \in S} p(\s)=1
		\end{align}
	\end{subequations}
	Finally, we show that finding an optimal solution to LP~\ref{eq:subproblem_dual} reduces to finding an optimal signaling scheme in an instance of private persuasion with $|\Theta|=|C|$ states of nature, and $\mu(\theta)=\frac{1}{|C|}$ for each $\theta$. 
	First, for each $r$ we define an arbitrary one-to-one correspondence between elements of $C_r$, and elements of $\Theta\setminus\{\theta_0\}$.
	Let $c_\theta$ ($\theta_c$) be the action (state) associated with $\theta$ ($c$).
	Receiver $r$'s utility function reads: 
	\begin{equation*}
	u_r(\theta,c) = \begin{cases}
	\begin{array}{ll}
	1 &\textnormal{if } \theta=\theta_0 \textnormal{ and } c=\hat c^r\\
	0 &\textnormal{if } \theta = \theta_0 \textnormal{ and } c\neq \hat c^r\\
	\beta_r(c) &\textnormal{if } \theta \neq \theta_0 \textnormal{ and } c =c_\theta\\
	0 &\textnormal{if } \theta \neq \theta_0 \textnormal{ and } c\neq c_\theta
	\end{array}
	\end{cases}.
	\end{equation*}
	Let sender's utility be such that $f_\theta=f_0$, for each $\theta\neq\theta_0$, and $f_{\theta_0}=f$.
	We have that $f_\theta(\s)=0$ for each $\theta\in\Theta\setminus\{\theta_0\}$ and $\s\in S$.
	Then, there exists an optimal signaling scheme such that, in each state $\theta\neq\theta_0$, $\phi(\theta,\s_\theta)=1$, where $\s_\theta$ is a signal recommending $c_\theta$ to each receiver (from an argument analogous to Lemma~\ref{lemma:positive_prob}). 
	Now, an optimal signaling scheme can be computed by focusing on $\theta_0$ (\emph{i.e.}, we employ the aforementioned signaling scheme for any $\theta\neq\theta_0$)  via the following LP:
	\begin{subequations}\label{eq:equivalent}
		\begin{align}
		\max_{\substack{\phi(\theta_0,\cdot)\in \\ [0,1]^{|C\times R|}}} & \sum_{\s\in S}\phi(\theta_0,\s)f_{\theta_0}(\s)\\
		\nonumber \textnormal{s.t.  } & \sum_{\theta\in\Theta}\sum_{\s:s_r=c}\mu(\theta)\phi(\theta,\s)(u_r(\theta,c)-u_r(\theta,c'))\geq 0\\
		&\hspace{2.5cm}\forall r\in R,\forall c,c'\in C\label{eq:equivalent_ic}\\
		& \sum_{\s\in S}\phi(\theta_0,\s)=1.
		\end{align}
	\end{subequations}
	The incentive constraints~\ref{eq:equivalent_ic} are trivially satisfied when $c=\hat c^r$.
	Moreover, for each $c \neq \hat c^r$, the incentive constraints~\ref{eq:equivalent_ic} can be rewritten as follows: first, notice that it is enough to consider $c'=\hat c^r$.
	Then, for each $r\in R$ and $c\in C_r$, we obtain:
	\begin{multline*}
	\sum_{\s:s_r=c}\phi(\theta_0,\s)(u_r(\theta_0,c)-u_r(\theta_0,\hat c^r))\geq \\ u_r(\theta_c,\hat c^r)-u_r(\theta_c,c),
	\end{multline*}
	which can be rewritten as $\sum_{\s:s_r=c}\phi(\theta_0,s)\leq\beta_r(c)$. 
	The equivalence between LP~\ref{eq:subproblem_dual} and LP~\ref{eq:equivalent} easily follows.
\end{proof}

\anon*
\begin{proof}
	It is enough to provide an algorithm for the maximization problem in Theorem~\ref{th:general}.
	We need to solve $\max_{\s \in S}f(\s)+\sum_{r\in R}w_r(s_r)$.
	Since $f$ is anonymous, for any persuasive signal $\s$, $f$'s value is determined by the vector $\p=(\delta(\s,c_0),\ldots,\delta(\s,c_{\ell}))$.
	Let $P=\{\p=(k_0,\ldots,k_\ell)\in\mathbb{N}_0^{|C|}| \sum_{i=0}^{\ell} k_i=|R|\}$, and notice that $|P|=\binom{|R|+|C|-1}{|R|}$, which is polynomial in the input size once the $|C|$ has been fixed (see~\cite{Stanley2011}).
	In order to solve the maximization problem, we enumerate over all $\p \in P$. 
	Once $\p$ has been fixed, we are left with the following problem: $\max_{\s \in S}\sum_{r\in R} w_r(s_r)$, where $\s$ has to be such that $\delta(\s,c_i)=k_i$ for each $i\in\{0,\ldots,\ell\}$.
	Specifically, the optimal assignment of receivers to actions can be found with the following LP:
	\begin{subequations}
		\begin{align*}
		\max_{\chi\in\mathbb{R}_{+}^{|R\times C|}} & \sum_{(r,c)\in R\times C}\chi_r(c) w_r(c)\\
		\textnormal{ s.t. } & \sum_{r\in R}\chi_r(c_i) = k_i \hspace{.5cm}\forall i\in\{0,\ldots,\ell\}\\
		& \sum_{c\in C} \chi_r(c)=1 \hspace{.75cm}\forall r\in R.
		\end{align*}
	\end{subequations}
	We look for an integer solution of the problem, which always exists and can be found in polynomial time (see, e.g.,~\citep{orlin1997polynomial}).
	This is because the formulation is an instance of the \emph{maximum cost flow problem}, which is, in its turn, a variation of the \emph{minimum cost flow problem}.
	%
	%
	Once an integer solution has been found, an optimal solution of the original maximization problem is the signal obtained by assigning to each $r$ the action $c$ s.t. $\chi_r(c)=1$.
\end{proof}

\corollaryNeg*

\begin{proof}
	\textsf{PL-V} with two candidates is equivalent to \textsf{K-V} with $k^*=\lfloor\frac{|R|}{2}\rfloor+1$. We show that  \textsf{K-V} with arbitrary $k$ reduces to \textsf{K-V} with $k=k^*$. Theorem~\ref{th:negative} concludes the proof.
	
	We distinguish two cases: 
	i) Suppose $k>k^*$. We add $2k-|R|-1$ voters that prefer $c_1$ in any state. There are $|R^*|=2k-1$ voters and candidate $c_0$ has $k=\lfloor\frac{|R^*|}{2}\rfloor+1$ votes only if $k$ of the initial receivers vote for $c_0$.
	ii) Suppose $k< k^*$. We add $|R|+1-2k$ voters that prefer $c_0$ in any state. There are $|R^*|=2|R|+1-2k$ voters and candidate $c_0$ has $\lfloor\frac{|R^*|}{2}\rfloor+1=|R|-k+1$ votes only if $k$ of the initial receivers vote for $c_0$.
\end{proof}
\end{document}